\newtheorem{thm}{Theorem}[section]
\newcommand{\R}{\mathbb{R}}
\newcommand{\df}{\;\mathrm{d}}
\newcommand{\code}[1]{\texttt{#1}}
\newcommand{\be}{\begin{eqnarray}}
\newcommand{\ee}{\end{eqnarray}}
\newcommand{\by}{\begin{eqnarray*}}
\newcommand{\ey}{\end{eqnarray*}}
\newcommand{\bn}{\begin{enumerate}}
\newcommand{\en}{\end{enumerate}}
\newlist{casesp}{enumerate}{3} 
\setlist[casesp]{align=left,
                 listparindent=\parindent,
                 parsep=\parskip,
                 font=\normalfont\bfseries,
                 leftmargin=0pt, 
                 labelwidth=0pt, 
                 itemindent=.4em,labelsep=.4em, 
                 partopsep=0pt, }
\setlist[casesp,1]{label=Case~\arabic*:,ref=\arabic*}
\setlist[casesp,2]{label=Case~\thecasespi.\arabic*:,ref=\thecasespi.\arabic*}
\setlist[casesp,3]{label=Case~\thecasespii.\alph*:,ref=\thecasespii.\alph*}
\begin{document}

\title{\bf Pandemic Risk Management: Resources Contingency Planning and Allocation\footnote{First version: April 8, 2020.}}

\author{Xiaowei Chen\thanks{School of Finance, Nankai University. Email: chenx@nankai.edu.cn.}, Wing Fung Chong\thanks{Department of Mathematics and Department of Statistics, University of Illinois at Urbana-Champaign. Email: wfchong@illinois.edu.}, Runhuan Feng\thanks{Department of Mathematics, University of Illinois at Urbana-Champaign. Email: rfeng@illinois.edu.}, Linfeng Zhang\thanks{Department of Mathematics, University of Illinois at Urbana-Champaign. Email: lzhang18@illinois.edu.}}

\date{\today}
\maketitle



\begin{abstract}

Repeated history of pandemics, such as SARS, H1N1, Ebola, Zika, and COVID-19, has shown that pandemic risk is inevitable. Extraordinary shortages of medical resources have been observed in many parts of the world. Some attributing factors include the lack of sufficient stockpiles and the lack of coordinated efforts to deploy existing resources to the location of greatest needs. 

The paper investigates contingency planning and resources allocation from a risk management perspective, as opposed to the prevailing supply chain perspective. The key idea is that the competition of limited critical resources is not only present in different geographical locations but also at different stages of a pandemic. This paper draws on an analogy between risk aggregation and capital allocation in finance and pandemic resources planning and allocation for healthcare systems. The main contribution is to introduce new strategies for optimal stockpiling and allocation balancing spatio-temporal competitions of medical supply and demand.

\end{abstract}

{\allowdisplaybreaks

\section*{Introduction}

\subsection*{Lessons from Recent Pandemics}


An epidemic is an outbreak of a disease that spreads rapidly to a cohort of individuals in a wide area. According to the definition of the World Health Organization (WHO), ``a pandemic is the worldwide spread of a new disease." Because human has little immunity to the new disease, a pandemic can emerge quickly around the world. One of the most disastrous pandemics in the recent history is the 1918 flu pandemic which infected around 500 million and resulted in the deaths of estimated 50 million people worldwide, more than those died from the World War I. Most recently, the novel {\it coronavirus disease of 2019} (COVID-19) is an infectious disease caused by a new virus that only emerged in late 2019 and has since spread out to nearly every country in the world. According to Cutler and Summers, the COVID-19 could cause financial losses that sum up to \$16 trillion, or roughly $90\%$ of the annual GDP in the United States \citep{Cutler2020}. However, its far-reaching impact and consequent economic fallout due to the loss of productivity is yet to be realized; as of November 28, 2020, the number of infected cases has accumulated to over 61 million worldwide with the death toll reaching over 1.4 million (see, for example, \citet{Atlantic2020}).

The repeated history of pandemics in recent decades, such as SARS, swine flu, Ebola, and the most recent COVID-19, has taught us that pandemic risk is {\it inevitable}. Recent research studies (see, for instance, \citet{Jones2008} and \citet{Morse1995}) have shown that the frequency of pandemics has increased over the past century due to increased social connectivity, long-distance travel, urbanization, changes in land use, trade and consumption of wild animals, and greater exploitation of the natural environment. Figure \ref{fig:trend} visualizes both frequency and severity of well-recognized pandemics and public health emergencies of international concern declared by the WHO since 1900s. The vertical axis represents the number of documented infections on a logarithmic scale. Both the size and the color scale of the circles indicate the number of deaths resulting from the pandemics and public health emergencies. The alarming pattern of increased frequency clearly points to the critical importance of pandemic risk management.

\begin{figure}[h]
\centering
\includegraphics[width=\textwidth]{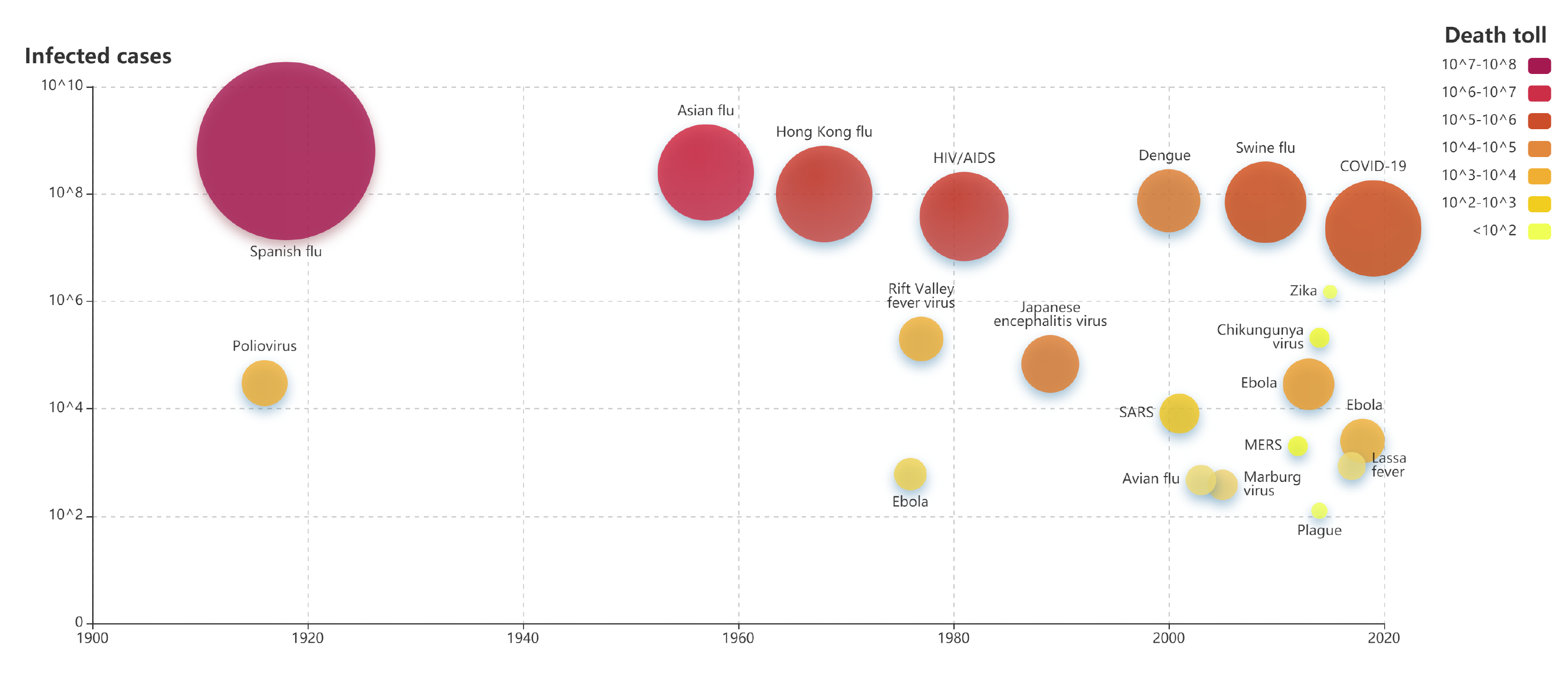}
\caption{Frequency and severity of pandemics and public health emergencies since 1900s}
\label{fig:trend}
\end{figure}


Governments around the world have been taking blames for their failure to promptly implement appropriate policies to contain the pandemic. Many countries experienced severe shortages of resources. \citet{Ranney2020} studied the critical role of scarce resources, such as ventilators and personal protective equipment (PPE), in shaping the direction of COVID-19. Such an unprecedented challenge exposes the inadequacy in contingency planning and resources allocation strategies of public health systems. The lack of planning drives policymakers to make impromptu decisions on resources acquisitions and allocations that may have exacerbated the extraordinary shortage. 

The United States boasts one of the best healthcare systems with a large network of healthcare professionals, best-equipped medical facilities and hospitals, and most advanced medical technology. Yet, the country is under-prepared for the COVID-19 pandemic. There were severe shortages of diagnostic and preventative medical supplies both for healthcare providers and the general public in many states in the early stage of the pandemic, which made it difficult for public authorities to contain the pandemic.   According to a recent report by President Obama's former advisors on science and technology \citep{Holdren2020}, there are several contributing factors to the lack of medical resources: (1) {\it National reserves of critical medical supplies were not replenished sufficiently prior to COVID-19.} Strategic National Stockpile (SNS) was established by the government in 2003 as the national repository of pharmaceutical and vaccination stockpiles. The SNS relies on the appropriation of funding from the Congress. Much of the mask stockpile was depleted during the 2009 H1N1 pandemic and the Congress has not acted quickly to provide the funding to replenish the stockpile to an appropriate level projected by many studies. (2) {\it In order to minimize inventory cost and improve efficiency, many manufacturers and supply chain management of medical supplies have shifted to just-in-time (JIT) inventory system prior to the pandemic.} Goods are only received just in time for production and distribution. The JIT system relies the ability of manufacturers to accurately predict the demand. The initial public policies such as lock-downs caused major disruptions to supply chains around the world and there were not sufficient inventories to absorb surge demand. (3) {\it There was a lack of sufficient coordination among federal and state governments to deploy existing resources to the most devastated areas in the country.} Healthcare professionals are put at high risk to treat patients without sufficient personal protective equipment. It was difficult to uncover and contain the spread of disease without adequate testing. Several states in the United States acted on their own to secure supplies from foreign manufactures and engaged in a price bidding war for limited supplies \citep{Estes2020}. Existing resources are not necessarily distributed on a basis of health need \citep{TobinTyler2020}. Many hardest hit states have to ration care, while other states have low utilization of their resources.

While there are policy related issues that require policymakers' actions, academics can contribute to the understanding of pandemic evolution and the resulting dynamics of demand and supply. There is a clear need for the development of scientific foundation for adaptive strategies for balancing demand and supply and rationing limited resources. Contingency planning and resources allocation in a centralized form have been advocated as two coherent strategies to mitigate catastrophic economic consequences from a pandemic; see for example \citet{Jamison2017}. \citet{Ranney2020} argued that the government should have tracked the use of resources and the projection of needs in all subsidiaries, and should have coordinated allocation of resources to reduce shortage across subsidiaries and over time in the course of a pandemic. A comparable example of centralized planning is the Federal Emergency Management Agency, which administers many pre-disaster risk mitigation programs, such as national flood insurance, mitigation grants, and post-disaster response plans including search and rescue team, medical assistance teams, monetary reliefs, etc; see \citet{Vanajakumari2016} and \citet{Stauffer2020} for information. 

Centralized planning and resources allocation have long been practiced as risk management strategies in the financial industry. For example, banks and insurers are heavily regulated by governments to ensure their capabilities to absorb severe financial losses and endure adverse economic scenarios \citep{Segal2011}. The {\it central hypothesis} of this paper is that many risk aggregation and capital allocation techniques drawn from financial and insurance literature can be extended and applied to pandemic resources planning and allocation.

%


 \begin{figure}[h!]
\centering
\includegraphics[width=\textwidth]{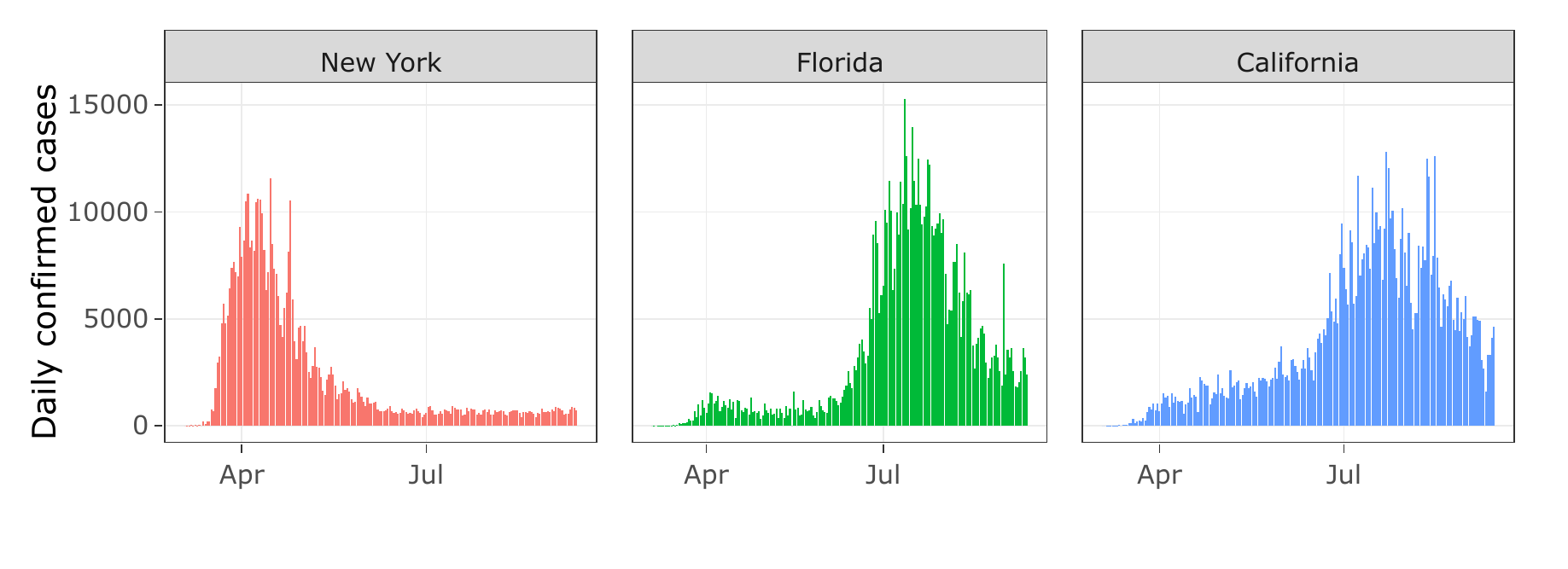}
\caption{New York, Florida, and California experience different phases of COVID-19, based on data as of September 12, 2020 from \citet{Atlantic2020}}
\label{daily_confirmed_cases}
\end{figure}

\subsection*{Case Study}\label{sec:case_study}


Figure \ref{daily_confirmed_cases} depicts the daily confirmed COVID-19 cases in the State of New York (NY), Florida (FL), and California (CA) in 2020. One could observe that, among these three states, NY was first hit the hardest by the pandemic in April, while both FL and CA experienced a peak of cases around July and August; moreover, although both NY and FL showed comparatively fewer cases in May and June, CA was savaged by the COVID-19 around that time.

Let us take a thought experiment for the moment. Imagine that these three states established resource pooling alliance prior to the pandemic. They could have complemented each other by delivering one state's surplus resources to aid another in deficit. For example, in April, the alliance could have coordinated  the efforts to send initial stockpile and increase emergency production to support the NY; in May and June, remaining resources, together with additional production, should have been redirected from both NY and FL to CA; by July and August, when both FL and CA were hit the hardest, unused resources in the NY could be made available to both FL and CA. 
Such a coalition is not unimaginable even in a decentralized political system like the United States. In April 2020, six northwestern states including New York, Connecticut, New Jersey, Rhode Island, Pennsylvania, and Delaware, formed a government procurement coalition for critical medical equipment in an effort to avoid a bidding war \citep{Holveck2020}. 

The purpose of this paper is to propose an overarching framework for different regions to optimize stockpiling and resources allocation at different pandemic stages in order to best utilize limited resources. While we use inter-state resources pooling as an illustrative example, applications can also include international collaboration on the production, procurement, distribution and pooling of critical medical resources such as masks, ventilators, pharmaceuticals, vaccines, etc.

\subsection*{Pandemic Risk Management Framework and Contribution}

A vast amount of recent literature on COVID-19 focus on the prediction of transmission dynamics (e.g. \citet{FernandezVillaverde2020, Hortacsu2020}), infected cases (e.g. \citet{Giordano_2020}), economic impact (e.g. \citet{Acemoglu2020, Gregory2020}), and the effect of non-pharmaceutical intervention and other public policies (e.g. \citet{Charpentier2020}). However, to the best of our knowledge, academic research on {\it quantitative framework} for contingency planning and resources allocation in response to pandemic risk are rare. 
While the banking and insurance industries have long had a rich tradition of developing technologies for robust risk management, the focus has been largely on financial and insurable risks.  This paper aims to take advantage of the vast medical literature on epidemic modeling and apply classic concepts from risk management and insurance literature such as reserving and capital allocation to pandemic risk management.

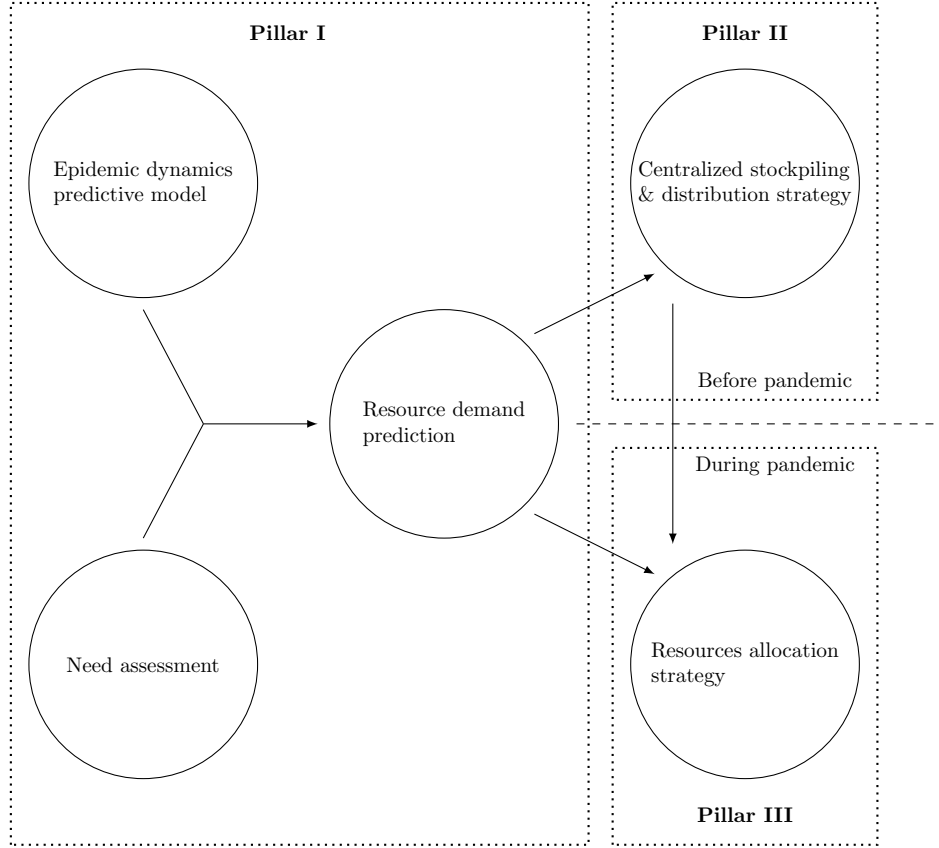
\begin{figure}[h!]
\centering
\begin{tikzpicture}[scale=0.8, transform shape]
\node[draw, align=left, circle, inner sep=0pt, minimum size=3.8cm] at (-10,4) {Epidemic dynamics\\ predictive model };
\node[draw, align=left, circle, inner sep=0pt, minimum size=3.8cm] at (-10,-4) {Need assessment};
\draw (-10,1.9) -- (-9,0);
\draw (-10,-1.9) -- (-9,0);
\draw[->, >=latex] (-9,0) -- (-7.1,0);
\node[draw, align=left, circle, inner sep=0pt, minimum size=3.8cm] at (-5,0) {Resource demand\\prediction};
\draw[->, >=latex] (-3.5,1.5) -- (-1.5,2.5);
\node[draw, align=left, circle, inner sep=0pt, minimum size=3.8cm] at (0,4) {Centralized stockpiling\\\& distribution strategy};
\draw[->, >=latex] (-3.5,-1.5) -- (-1.5,-2.5);
\node[draw, align=left, circle, inner sep=0pt, minimum size=3.8cm] at (0,-4) {Resources allocation\\ strategy};
\draw[->, >=latex] (-1.2,2) -- (-1.2,-2);
\draw[dashed] (-2.8, 0) -- (3.5, 0);
\node[align=left] at (0.5,0.7) {Before pandemic};
\node[align=left] at (0.5,-0.7) {During pandemic};
\draw[dotted, thick] (-2.2,0.4) rectangle (2.2,7);
\draw[dotted, thick] (-2.2,-7) rectangle (2.2,-0.4);
\draw[dotted, thick] (-12.2,-7) rectangle (-2.6,7);
\node[align=center] at (-7.6,6.5) {\textbf{Pillar I}};
\node[align=center] at (0,6.5) {\textbf{Pillar II}};
\node[align=center] at (0,-6.5) {\textbf{Pillar III}};
\end{tikzpicture}
\caption{Three-pillar pandemic risk management framework}
\label{fig:input_output}
\end{figure}

A strategic pandemic planning requires scientific assessment, rather than on-the-fly ad-hoc decisions and patchworks for damage control. 
In accordance with current practices of national pandemic preparedness and control strategies around the world, we summarize and propose a three-pillar framework for quantitative pandemic risk management:
\begin{itemize}
\item[] {\it Pillar I:  Regional and Aggregate Resources Supply and Demand Forecast.} Any pre-pandemic preparation plan should consist of supply and demand assessment and forecast. The supply side should include inventory assessments of critical resources and supplies, the maximum capacity of services, the capability of emergency acquisition and production. The demand side requires an understanding of the dynamics of a potential pandemic across regions and across borders. Historical data and predictive models can be used to project the evolution of a pandemic and the resulting surge demand around a healthcare system. 
\item[] {\it Pillar II: Centralized Stockpiling and Distribution}. A central authority coordinates the efforts to develop a national preparedness strategy and to set up reserves of critical resources including preventative, diagnostic and therapeutic resources. A response plan is also necessary to understand how the central authority can deliver resources to different region quickly to meet surge demands and to balance competing interests and priorities.
\item[] {\it Pillar III: Central-Regional Resources Allocation.} A pandemic response plan is critical for a central authority to contain and control the spread of a pandemic in regions under its jurisdiction. As demand may exceed any best-effort pre-pandemic projection, the authority needs to devise optimal strategies that best utilize limited existing resources and minimize the economic cost of supply-demand imbalance. A coordination strategy needs to be in place to ensure smooth communications with regional authorities. The allocation strategy should be based scientifically sound methods taking into account spatio-temporal differences across regions to ensure fairness and impartiality.
\end{itemize}

It should be pointed out that, while the first pillar is not the focus of this paper, it plays a critical role in ensuring the adequacy and effectiveness of planning and responses in second and third pillars. The proposed framework applies regardless of predictive models for projecting reported cases. 
{\it The main technical contribution of this paper lies in second and third pillars for which we propose centralized resources stockpiling, distribution, and allocation strategies.} 

\begin{table}[h!]
\begin{center}
\begin{tabular}{|l|l|}
\hline
Capital risk management & Pandemic risk management\\\hhline{|=|=|}
Business line and aggregate risk & Regional and aggregate resources demand\\\hline
Risk-based capital & Centralized stockpiling\\\hline
Business line capital allocation & Centralized distribution and allocation\\\hline
Trade-off between surplus/deficiency and cost of capital & Balance of supply/demand and economic cost \\\hline
\end{tabular}
\caption{\label{rm}Comparison between capital and pandemic risk managements}
\end{center}
\end{table}

The paper draws inspirations from two sources in insurance and risk management literature. (1) Insurance applications of epidemic models. Early applications of epidemic compartment models appeared in \citet{Hua2009}, in which real option pricing is used for operational risk management, and \citet{FenGar}, which analyzed epidemic insurance coverage. The study of epidemic insurance was extensively developed in stochastic setting in \citet{LefPic}, \citet{LefPic2018}, and \citet{LefSim}, and more recently to cyber risk assessment by \citet{HilOli}. All of these compartmental models can be used in Pillar I. (2) Capital allocation. The subject of capital management is well-studied in the insurance literature. The applications of reserving and capital allocation form the basis of the proposed Pillars II and III.
Table \ref{rm} reveals how our proposed framework shadows the classical capital management. While spatial balancing of allocation is well-known in banking and insurance (see, for instance, \citet{Dhaene2011}, and \citet{Chong2020}), this paper develops a {\it novel spatio-temporal balancing} of resources distribution and allocation, which, to the best of our knowledge, was not previously studied in either financial or management literature.

The rest of the paper is organized as follows. Each of the next three section provides detailed discussion of one of the three pillars in the proposed pandemic risk management framework as well as economic interpretations of resulting optimal strategies. Numerical examples are embedded in the discussion for better readership. We conclude in the last section with discussions of potential applications and future work.


\section*{Pillar I: Regional and Aggregate Resources Demand Forecast}\label{sec:forecast}
In the pre-pandemic time, a central authority should first model the pandemic transmission dynamics in each region. Regardless of the choice of epidemiological models, the central authority should calibrate the model in each region by its preparedness and other contingency measurements. Indeed, epidemic forecast models have been used for healthcare policy making and public communications; see, for example, \citet{Leung2020} and \citet{Tian2020}. In this paper, in line with \citet{can2020} and \citet{Hill2020}, the population in each region is divided into seven mutually exclusive compartments, namely, the susceptible ($S$), the exposed ($E$), the mild infected ($I_1$), the infected with hospitalization ($I_2$), the infected with intensive care ($I_3$), the recovered ($R$), and the deceased ($D$). The dynamics, among these seven compartments, are governed by a set of ordinary differential equations, and the model is, in short, called the SEIRD.

{\allowdisplaybreaks
This SEIRD model is characterized by a set of ordinary differential equations that describe population flows among all aforementioned compartments:
\by
\df S(t)&=&-(\beta_1 I_1(t)+\beta_2 I_2(t)+\beta_3 I_3(t)) S(t) \df t,\\
\df E(t)&=& [(\beta_1 I_1(t)+\beta_2 I_2(t)+\beta_3 I_3(t))S(t)-\gamma E(t)] \df t,\\
\df I_1(t)&=& [\gamma E(t)-(\delta_1+p_1) I_1(t)] \df t,\\
\df I_2(t)&=& [p_1I_1(t)-(\delta_2+p_2)] I_2(t)] \df t,\\
\df I_3(t)&=& [p_2I_2(t)-(\delta_3+\mu)I_3(t)] \df t,\\
\df R(t)&=&[\delta_1I_1(t)+\delta_2I_2(t)+\delta_3I_3(t)]\df t,\\
\df D(t)&=&\mu I_3(t) \df t.
\ey
All parameters in the set of equations bear clinical meanings; $\beta_i$, $i=1,2,3$, is the transmission rate of the infected class $I_i$; $1/\gamma$ is the average latency period; $1/\delta_i$, $i=1,2,3$, is the average duration of infection in the class $I_i$ before recovery to the class $R$; $p_i$, $i=1,2,3$, represents the rate at which conditions worsen and individuals require healthcare at the next level of severity; $\mu$ is the rate for most severe cases in the class $I_3$ to the deceased class $D$. Suppose that the total number of individuals in the entire popluation is $N.$ Each of the ordinary differential equation represents a decomposition of instantaneous change in the population of a compartment. For example, the first equation shows that the instantaneous rate of reduction in the number of susceptible, $- \df S(t)$ matches the sum of the rates of infection due to contacts with the infected of all classes, $\beta_1 I_1(t) S(t)+\beta_2 I_2(t) S(t)+\beta_3 I_3(t)S(t)$. The products are due to the law of mass action in biology. For example, the rate of secondary infection by the mildly infected, $(\beta_1 N) I_1(t) (S(t)/N)$ can be interpreted as the number of adequate contact each infected makes to transmit the disease $\beta_1 N$ multiplied by the number of infected $I_1(t)$, multiplied by the percentage that each contact is made with a susceptible, $S(t)/N.$ All other equations can be explained in similar ways. The estimation of these model parameters are well-studied in the literature for the COVID-19 as well as other pandemics, such as \citet{Wu2020}, \citet{Yang2020}, and \citet{Yang2020a}. Based on these parameters, the basic reproductive ratio $R_0$ of a pandemic is given by:
\begin{equation*}
R_0=\frac{N}{p_1+\delta_1}\left(\beta_1+\frac{p_1}{p_2+\delta_2}\left(\beta_2+\beta_3\frac{p_2}{\mu+\delta_3}\right)\right).
\end{equation*} The basic reproductive ratio $R_0$ can be estimated by empirical data and is often used to calibrate other parameters.
In what follows, we shall use discretized version of the compartmental model. For example, we use the notation $I_{1,j}=I_1(j \Delta t )$ to indicate the number of mild cases projected on the $j$-th period each with the length $\Delta t.$ We sometimes omit the information on $\Delta t$ as the time unit may vary depending on the reporting period.}


%

Based on predictive models such as the above-mentioned regional SEIRD models, a central authority could predict, prior to a pandemic or at the onset of a pandemic, changes in demand over the course of the pandemic. Resources require different stockpiling and allocation strategies, depending on their shelf lives. In this paper, we consider two types of medical resources, namely durable and single-use. Durable resources refer to those that can perform their required functions for a lengthy period of time without significant expenditures of maintenance or repair. Single-use resources are those that are designed to be used once and then disposed of. Mechanical ventilators and PPE are used as representative examples of durable and single-use resources respectively in this paper. \\

\begin{table}[h!]
\centering
\begin{tabular}{c|c|c}
& $\alpha\in\left[0,1\right]$ & Data source \\ \hline
China & $<20\%$ &  \citet{Yang2020a} \\  
Italy & $[87\%,90\%]$ &  \citet{Grasselli2020} \\
Seattle & $75\%$ &  \citet{Bhatraju2020}\\
Washington & $71.4\%$ &  \citet{Arentz2020}
\end{tabular}
\caption{Percentage of severe ICU infected cases requiring ventilators}
\label{tbl:vent}
\end{table}

\noindent {\bf Durable Resource: Ventilator}\\

Based on the findings in medical literature (references within Table \ref{tbl:vent}), there are estimates of the percentage $\alpha$ of the infected with intensive care that require the use of mechanical ventilators. These regional differences can be addressed in separate regional compartment models. We can use these estimates to project the ventilator demand by $X_j^{\text{VEN}\left(i\right)}=\alpha I_{3,j}^{\left(i\right)}$, where $i$ indicates the $i$-th region in the alliance and $j$ indicates the $j$-th day of the pandemic. The model can also be extended to include time-varying percentage of severe patients requiring ventilators. The calculations in the rest of the paper would carry through. Figure \ref{fig:demand}(a) shows the projected ventilator demands in New York, Florida and California based on the SEIRD model proposed by \citet{can2020}, which is calibrated to publicly available reported cases as of September 12, 2020, and demand assessment parameters in Appendix \ref{sec:para}.\\

\noindent {\bf Single-Use Resource: Personal Protective Equipment}\\

The assessment of need for PPE sets varies by the patients' class and severity of medical conditions in care, and the function of medical professionals. Table \ref{ECDCTable2} offers an example of such need assessment by \citet{ECDPC2020}. Given these estimates, we can project the regional PPE set demand by $X_j^{\text{PPE}\left(i\right)}=\theta^{E}(S^{\left(i\right)}_{j-1}-S^{\left(i\right)} _j)+\theta^{I_2}I^{\left(i\right)}_{2,j}+\theta^{I_3}I^{\left(i\right)}_{3,j}$, where $\theta^{E}$ is the number of PPE sets per exposed case, $\theta^{I_2}$ is the number of PPE sets per day per hospitalized patient, and $\theta^{I_3}$ is the number of PPE sets per day per intensive care patient. Note that $S_{j-1}-S_j$ represents the daily exposed cases whereas $I_{2,j}$ and $I_{3,j}$ keeps track of existing infected cases that require medical attention. Figure \ref{fig:demand}(b) shows how ventilator and PPE demands are projected to evolve over time for New York, Florida, and California, based on the model by \citet{can2020} and the PPE need assessment in Appendix \ref{sec:para}.

\begin{table}[h!]
\centering
\begin{tabular}{|c|c|c|c|}
\hline
& Suspected & \makecell[c]{Infected \\
Hospitalized cases} & \makecell[c]{Infected \\
ICU cases} \\
\hline  
& Number of sets per case  &  \multicolumn{2}{c|}{Number of sets per day per patient}  \\ 
Healthcare staff & $\theta^E$ & $\theta^{I_2} $ & $\theta^{I_3}$ \\
\hhline{|=|=|=|=|}
Nursing & 1-2 & 6 & 6-12 \\  
\hline  
Medical & 1& 2-3 & 3-6 \\
\hline  
Cleaning & 1 & 3 & 3 \\
\hline  
\makecell[c]{Assistant nursing \\ and other services} & 0-2 & 3 & 3 \\
\hline
Total & 3–6 & 14–15 & 15–24 \\
\hline
\end{tabular}
\caption{Minimum amount of PPE sets for different scenarios}
\label{ECDCTable2}
\end{table}

\begin{figure}[h!]
\centering
\begin{subfigure}{.5\textwidth}
\centering
\includegraphics[width=\linewidth]{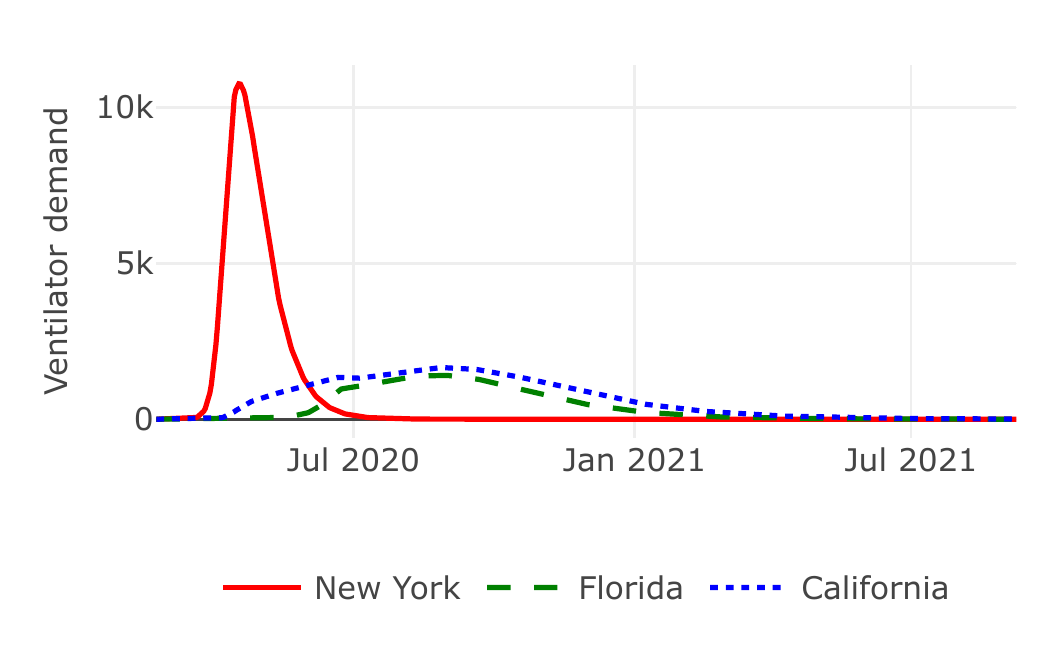}
\caption{Ventilator}
\label{fig:vent_demand}
\end{subfigure}%
\begin{subfigure}{.5\textwidth}
\centering
\includegraphics[width=\linewidth]{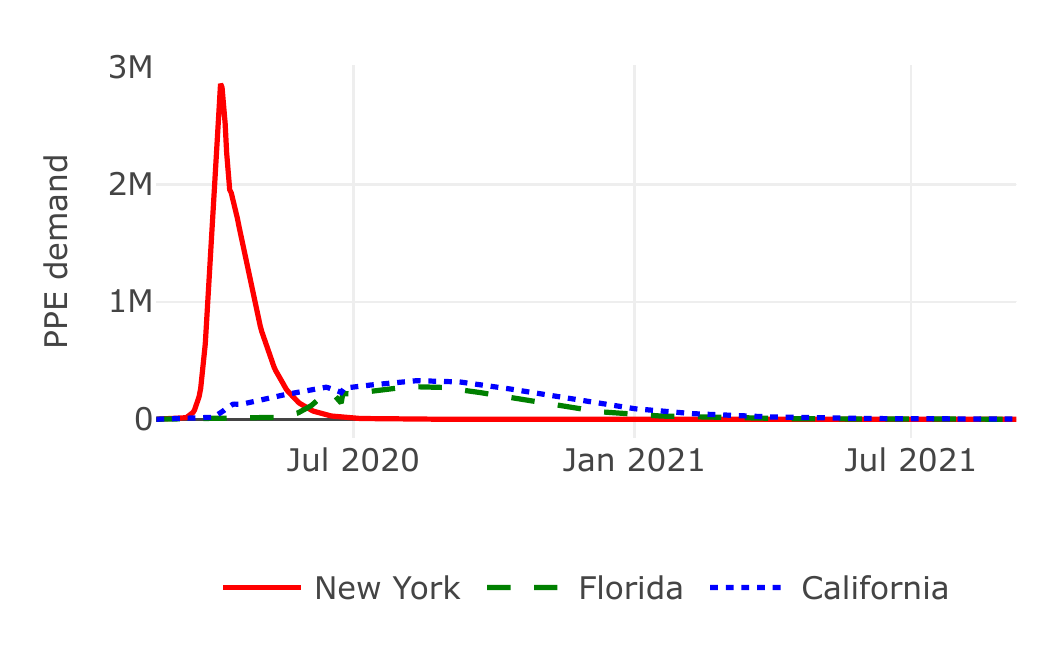}
\caption{Personal protective equipment}
\label{fig:ppe_demand}
\end{subfigure}
\caption{Ventilator and personal protective equipment regional demand prediction in New York, Florida, and California}
\label{fig:demand}
\end{figure}

In the first pillar, the central authority is expected to work with regional authorities and healthcare professionals to predict the dynamics of regional demands. All regional data are then compiled and aggregated to form the basis of forecast for the system-wide resource demand. Suppose that there are a total of $n$ regions in a healthcare system or medical resource alliance. For example, the aggregate ventilator demand can be determined by $X_j^{\text{VEN}}=\sum_{i=1}^{n}X_j^{\text{VEN}\left(i\right)}$, while the aggregate PPE set demand may be given by $X_j^{\text{PPE}}=\sum_{i=1}^{n}X_j^{\text{PPE}\left(i\right)}$. Figure \ref{fig:aggre_demand} shows how the aggregate ventilator and PPE demand prediction for the COVID-19 pandemic could have been made in the hypothetical example of a three-state resources pooling alliance. Observe that projections for ventilators and PPE sets show very similar patterns as both were driven by the same SEIRD model. The peaks in demand for ventilator are delayed compared with those for PPE sets in Figure \ref{fig:aggre_demand} due to the fact that it may take a few days before newly diagnosed patients to develop symptoms that require ventilator intervention. The projection of regional and aggregate demands offers health authorities a clear understanding of the {\it temporal competition} of critical resources.

\begin{figure}[h!]
\centering
\begin{subfigure}{.5\textwidth}
\centering
\includegraphics[width=\linewidth]{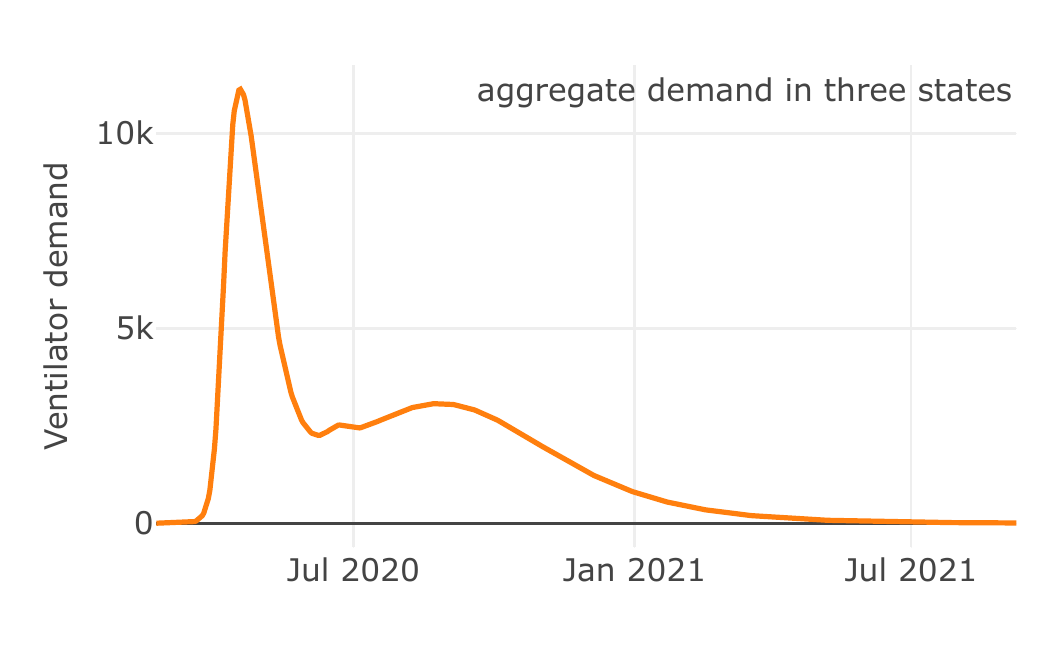}
\caption{Ventilator}
\label{fig:aggre_vent_demand}
\end{subfigure}%
\begin{subfigure}{.5\textwidth}
\centering
\includegraphics[width=\linewidth]{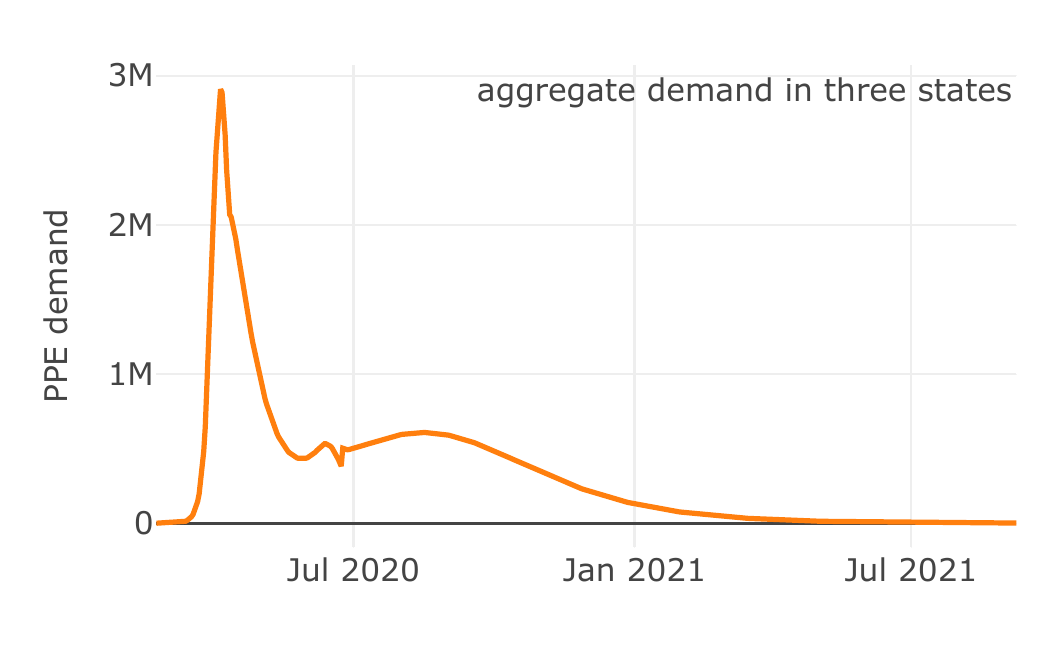}
\caption{Personal protective equipment}
\label{fig:aggre_ppe_demand}
\end{subfigure}
\caption{Ventilator and personal protective equipment aggregate demand prediction in New York, Florida, and California}
\label{fig:aggre_demand}
\end{figure}


It should be pointed out that predictive models in Pillar I, such as the SEIRD model introduced in this section, are used for multiple purposes as shown in Figure \ref{fig:input_output}. First, they need to be developed prior to a pandemic using historical data and to form the basis of demand forecast for contingency planning in Pillar II. Then, as a pandemic starts to emerge, the predictive models also need to be re-calibrated and updated with latest medical knowledge and reported cases. New forecasts would then be fed into models to determine optimal allocation strategies in Pillar III. As medical knowledge of the viral disease evolves and predictive models improve over time, Pillars I and III may be revisited from period to period. When a distribution schedule of resources requires an update, we can go back to Pillar II. Therefore, the three-pillar framework may be utilized in circles such as Pillars I, II, III, I, III, I, III, I, II, III, etc.

\section*{Pillar II: Centralized Stockpiling and Distribution}\label{sec:stock_dis}

As the pandemic unfolds, many hospitals and healthcare facilities may run out of pharmaceuticals and other essential resources before emergency production can pick up and additional supplies become available. To meet the surge demand at the onset of a pandemic, many countries maintain national repositories of antibiotics, vaccines, chemical antidotes, antitoxins, and other critical medical supplies. A {\it centralized stockpiling strategy} is intended to provide a stopgap measure to meet the surge in resources demand at the early stage of the pandemic. There has been well-established literature on stockpiling strategies for influenza pandemics; see, for example, \citet{Greer2013} and \citet{Siddiqui2008}.

One should keep in mind that a practical stockpiling strategy is often an act of {\it balance between adequate supply and economic cost}. On one hand, under-stocking is a common issue as resources and their storage can post heavy cost, and the actual demand during the pandemic outbreak could deviate from the projection; for example, \citet{UW} claimed that as many as $20$ states in the United States are expected to encounter shortage in ICU beds when the COVID-19 cases peak. On the other hand, excessive stockpiling for long term could lead to unnecessary waste, especially for disposable and perishable resources; for instance, \citet{stat} reported that, in March 2020 during the COVID-19 pandemic, the Strategic National Stockpile in the United States stocked $13$ million N95 masks, of which as many as $5$ million may have expired, partly contributing to the nation-wide shortage of masks.

In the second pillar of our proposed framework, based on the estimated aggregate resources demand, the central authority could then develop stockpiling and distribution strategies in regular time before a pandemic. Notice that durable resources such as ventilators can be reused throughout the pandemic, while single-use resources such as PPE sets must be disposed of after one-time usage. Hence, we have to treat them separately for optimal centralized stockpiling and distribution strategies.

\subsection*{Durable Resource: Ventilator}
It is typical that a central authority has to determine an optimal initial stockpile size $K_0$ of resources to maintain in some centralized location. In addition, to meet surge demand, the authority may need to reach contractual agreements with suppliers for emergency orders, which may be limited by the maximum production rate of $a$ units per day during a pandemic. Since ventilators are durable, the stock of ventilators does not decrease over time due to usage. We assume that they can be deployed to different regions at negligible cost.  Therefore, the total number of available ventilators in the entire alliance is given by $K_j=K_0+aj$, on the $j$-th day since the onset of the pandemic. Hence, the only decision variable of the central authority in the case of ventilators is the initial stockpile size $K_0$.

\begin{figure}[h!]
\centering
\begin{subfigure}{.5\textwidth}
\centering
\includegraphics[width=\linewidth]{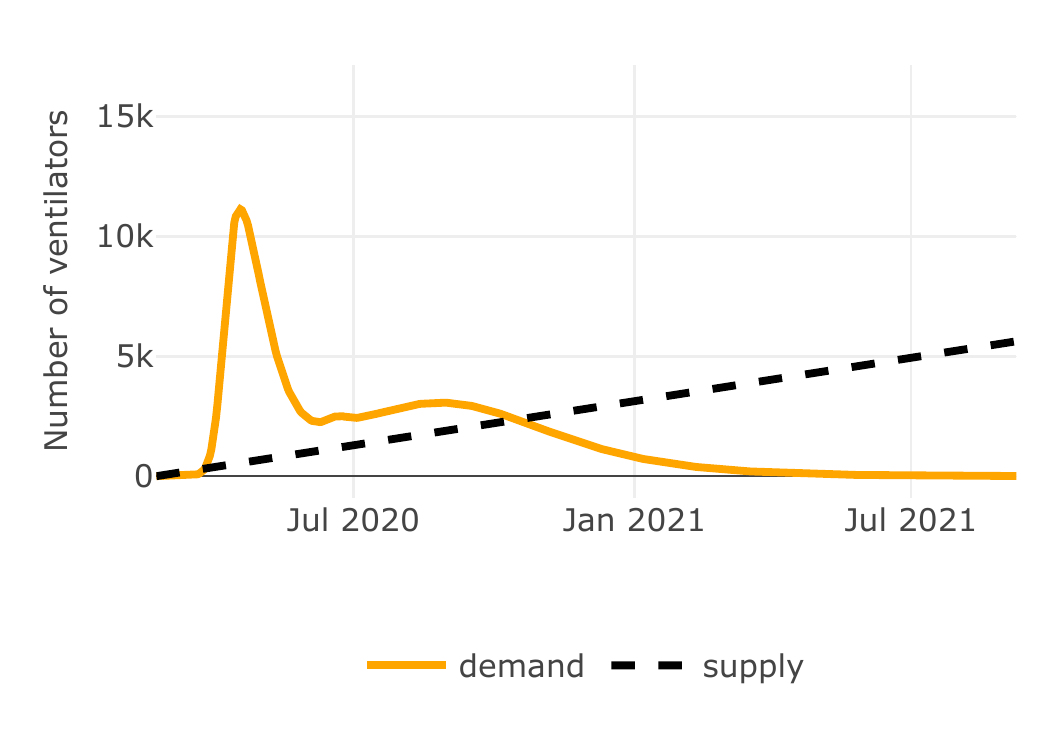}
\caption{Extreme understocking}
\label{fig:aggre_vent_supply_lo}
\end{subfigure}%
\begin{subfigure}{.5\textwidth}
\centering
\includegraphics[width=\linewidth]{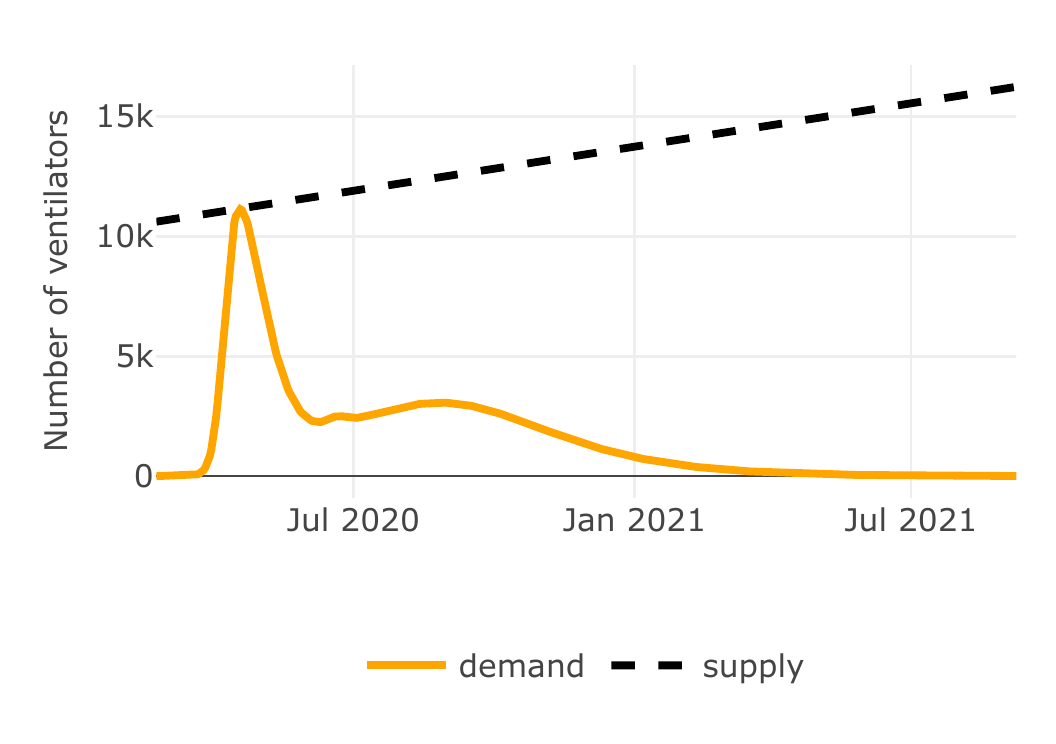}
\caption{Extreme oversupply}
\label{fig:aggre_vent_supply_hi}
\end{subfigure}
\caption{Two extreme scenarios of initial stockpile size $K_0$ for ventilators}
\label{fig:aggre_vent_supply}
\end{figure}


To better explain the need for an optimal initial stockpile size $K_0$, consider two extreme cases in Figures \ref{fig:aggre_vent_supply_lo} and \ref{fig:aggre_vent_supply_hi} for the three-state alliance. On one end of the extreme, the central authority may decide not to hold any initial stockpile but simply rely on the maximum emergency production limit during the pandemic; Figure \ref{fig:aggre_vent_supply_lo} shows a clear extreme shortage at the first peak time of aggregate ventilator demand. On the other end, suppose that the central authority decides to hold an extraordinary amount of initial stockpile for ventilators to meet the highest peak of aggregate ventilator demand; Figure \ref{fig:aggre_vent_supply_hi} illustrates a clear extreme oversupply of ventilators during most of the time of pandemic; also, in this case, the economic cost of severe storage can be huge. Therefore, the central authority has to take a delicate balance on an initial stockpile size $K_0$ that takes into account the economic cost of shortage, oversupply, as well as storage costs. 

Consider the following optimization model for an initial stockpiling size.

\begin{equation} \label{vensum}
\min_{K_0\geq 0}\sum_{j=1}^{m}\omega_j\left(\frac{\theta^{+} _j}{2}\left(X^{\text{VEN}} _j-\left(K_0+aj\right)\right)_{+}^2+\frac{\theta^{-} _j}{2}\left(X^{\text{VEN}} _j-\left(K_0+aj\right)\right)_{-}^2+c_j\left(K_0+aj\right)\right)+c_0K_0,
\end{equation}
where $m$ is the number of days of the pandemic, $\omega_j$ is a weight for significance of precision for the costs on the $j$-th day of the pandemic, $\theta_j^{+}$ is an economic cost per squared unit of shortage, $\theta_j^{-}$ is an opportunity cost per squared unit of oversupply, $c_j$ is the aggregate cost of possession per unit of ventilators per day, $c_0$ is the initial stockpile cost, which may include both the acquisition cost and expected cost of possession (storage, maintenance, inventory logistics, opportunity cost). The quadratic form can be interpreted as follows. While one copy of the quantity $X^{\text{VEN}} _j-\left(K_0+aj\right)$ represents the amount of resource imbalance (shortage or surplus), the other copy $(\theta^\pm_j/2) [X^{\text{VEN}} _j-\left(K_0+aj\right)]_{\pm}$ can be viewed as the (linear) variable cost of the imbalance. In other words, the larger the imbalance, the higher the price to pay. The quadratic form is the product of cost per unit and the unit of imbalance, which yields the overall economic cost of imbalance.  The weight $w_j$ can be used for different purposes. For example, it may be reasonable to assume that the precision of meeting demands in near future is more important than that in the far-future given the uncertainty with prediction. Another case may be to make the weight proportional to the daily demand $X_j$ as the demand-supply imbalance can have a greater impact on population dense areas than otherwise. 

\begin{figure}[ht!]
    \begin{center}
    \scalebox{0.9}{
    \begin{tikzpicture}
        \draw (0,-5) -- (0,3.2);
        \draw (0,-5) -- (14,-5);
        \draw (0,-5) rectangle (1.4,-4);
        \node[align = center] at (0.7, -4.3) {$Y_{[1]}$};
        \draw (1.4,-5) rectangle (2.8,-3.5);
        \node[align = center] at (2.1, -3.8) {$Y_{[2]}$};
        \node[align = center] at (3.5, -4) {$\dots$};
        \draw (4.2,-5) rectangle (5.6,-1.6);
        \node[align = center] at (4.9, -1.9) {$Y_{[J-2]}$};
        \draw (5.6,-5) rectangle (7,-1);
        \node[align = center] at (6.3, -1.3) {$Y_{[J-1]}$};
        \draw (7,-5) rectangle (8.4,1);
        \node[align = center] at (7.7, 1.3) {$Y_{[J]}$};
        \draw (8.4,-5) rectangle (9.8,1.3);
        \node[align = center] at (9.1, 1.6) {$Y_{[J+1]}$};
        \node[align = center] at (10.5, -4) {$\dots$};
        \draw (11.2,-5) rectangle (12.6,2);
        \node[align = center] at (11.9, 2.3) {$Y_{[m-1]}$};
        \draw (12.6,-5) rectangle (14, 2.6);
        \node[align = center] at (13.3, 2.9) {$Y_{[m]}$};

        \draw[dashed, thick] (-0.5, 0) -- (14.5, 0); 
        \node[align = center] at (-0.25, 0.2) {$K_0$};

        \draw[arrows = {triangle 90 - triangle 90}] (0.7, -3.9) -- (0.7, -0.1);
        \draw[arrows = {triangle 90 - triangle 90}] (2.1, -3.4) -- (2.1, -0.1);
        \draw[arrows = {triangle 90 - triangle 90}] (4.9, -1.5) -- (4.9, -0.1);
        \draw[arrows = {triangle 90 - triangle 90}] (6.3, -0.9) -- (6.3, -0.1);
        \draw[arrows = {triangle 90 - triangle 90}] (7.7, 0.9) -- (7.7, 0.1);
        \draw[arrows = {triangle 90 - triangle 90}] (9.1, 1.2) -- (9.1, 0.1);
        \draw[arrows = {triangle 90 - triangle 90}] (11.9, 1.9) -- (11.9, 0.1);
        \draw[arrows = {triangle 90 - triangle 90}] (13.3, 2.5) -- (13.3, 0.1);
    \end{tikzpicture}}
    \end{center}
    \caption{Optimal initial stockpile $K_0$ relative to projected shortages without initial stockpile}
    \label{fig:optim_K0}
    \end{figure}
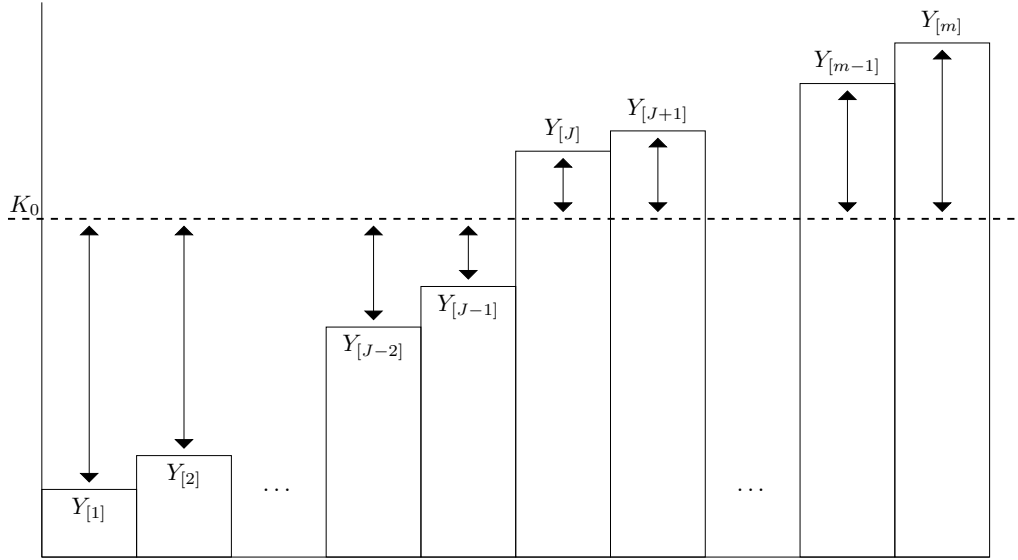

To understand the analytical solution to this problem, we need to look at the projected shortage without any initial stockpile, $Y_j:=X_j^\text{VEN} - aj$, for $j=1, \cdots, m,$ which is the accumulated demand less the accumulated supply apart from the initial stockpile. Note that we consider the accumulated supply because the resources are durable and can be reused. When $Y_j>0,$ there is a drain on the initial stockpile as current demand exceeds the accumulated supply. Otherwise, the stockpile increases as supply exceeds demand. Because the economic costs of shortage and surplus are weighed differently, the value of this objective function depends on the number of days with decreasing stockpile $(Y_j>0)$ and those with increasing stockpile $(Y_j<0)$. The analytical solution to this problem requires {\it sorting} projected shortages in an ascending order. Let us denote the sorted sequence by $\{Y_{[j]}, j=1, \cdots, m\},$ where $Y_{[j]}$ represents the $j$-th smallest projected shortage.   In the objective function \eqref{vensum}, the cost coefficient $\theta^\pm_j$ applies according to whether or not stockpile exceeds demand. If $K_0$ is placed below $Y_{[j]}$, there is a shortage in the healthcare system and hence the cost coefficient $\theta^+$ is applied. Otherwise, there is a surplus in the system and the cost efficient $\theta^-$ is applied. The optimality is achieved when $K_0$ is kept at a delicate position. See Figure \ref{fig:optim_K0}. The nature of the sum of squared shortages in \eqref{vensum} determines that the optimal initial stockpile $K^\ast_0$ should be squeezed between $Y_{[J-1]}$ and $Y_{J}$ in such a way that
\[Y_{[J-1]}\le \frac{\sum_{j=1}^{J-1}\omega_{[j]}{\theta_{[j]}^-}\left(Y_{[j]}-\frac{c_{[j]}}{\theta_{[j]}^-}\right)+\sum_{j=J}^{m}\omega_{[j]}{\theta_{[j]}^+}\left(Y_{[j]}-\frac{c_{[j]}}{\theta_{[j]}^+}\right)-c_0}{\sum_{j=1}^{J-1}\omega_{[j]}{\theta_{[j]}^-}+\sum_{j=J}^{m}\omega_{[j]}{\theta_{[j]}^+}} \le Y_{[J]}.\] Once $J$ is identified, the optimal stockpile $K_0^*$ is given by 
\begin{equation*}
    K_0^* = \max\left\{\frac{\sum_{j=1}^{J-1}\omega_{[j]}{\theta_{[j]}^-}\left(Y_{[j]}-\frac{c_{[j]}}{\theta_{[j]}^-}\right)+\sum_{j=J}^{m}\omega_{[j]}{\theta_{[j]}^+}\left(Y_{[j]}-\frac{c_{[j]}}{\theta_{[j]}^+}\right)-c_0}{\sum_{j=1}^{J-1}\omega_{[j]}{\theta_{[j]}^-}+\sum_{j=J}^{m}\omega_{[j]}{\theta_{[j]}^+}},\;\; 0\right\}.
\end{equation*}
The proof of this result can be found in Appendix \ref{appendix_b1}.  This result shows that the optimal initial stockpile $K_0$ is the weighted average of all projected shortages discounted by the cost of possession relative to the economic cost of shortage, $Y_{[j]} - c_{[j]}/\theta^{\pm}_{[j]}$. The adjustment term $c_{[j]}/\theta^{\pm}_{[j]}$ indicates that, the higher cost of possession relative to economic cost of imbalance, the fewer ventilators should be acquired. It is logical that, if the cost of possession for durable resource is too high, the central authority in a poor country may have little financial means to pay for stockpiling and be left with no choice but to deal with the demand-supply imbalance. In contrast, if the economic cost of imbalance is too high due to lost productivity or even the society's resentment on government's failure to meet demand, then the central authority would ignore the cost of possession and do everything possible to reduce the shortage. 

\begin{figure}[htb]
\centering
\begin{subfigure}{.5\textwidth}
\centering
\includegraphics[width=\linewidth]{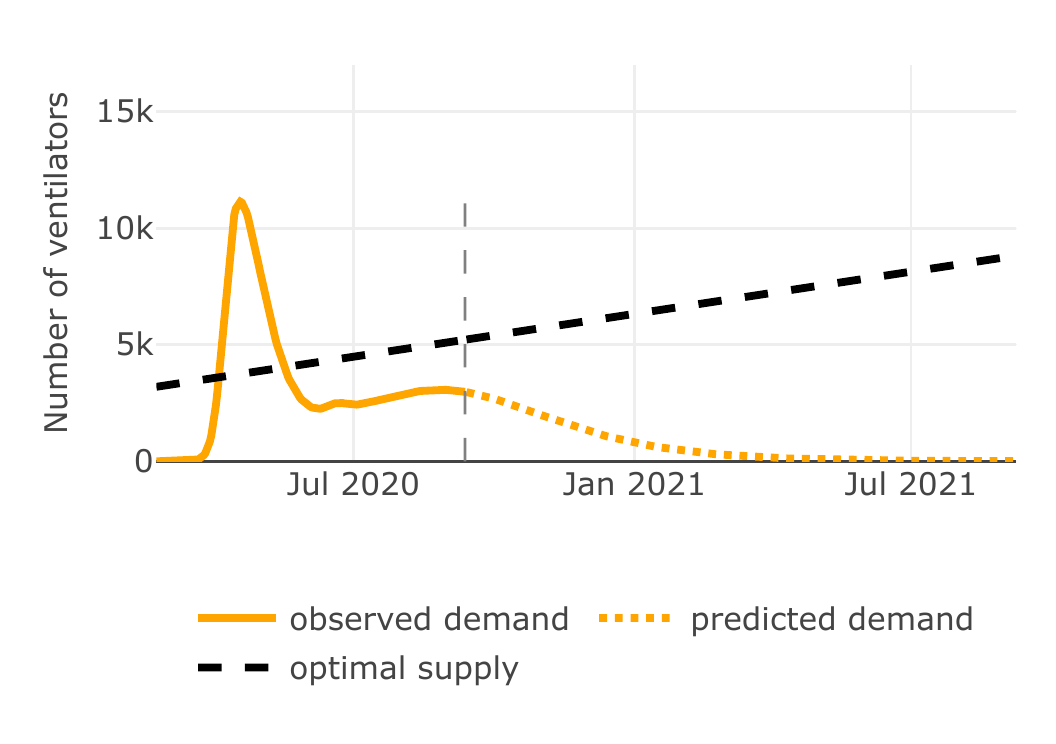}
\caption{$\theta_j^+ = \theta_j^-$ (Shortage costs the same as surplus)}
\label{fig:aggre_vent_supply_optim}
\end{subfigure}%
\begin{subfigure}{.5\textwidth}
\centering
\includegraphics[width=\linewidth]{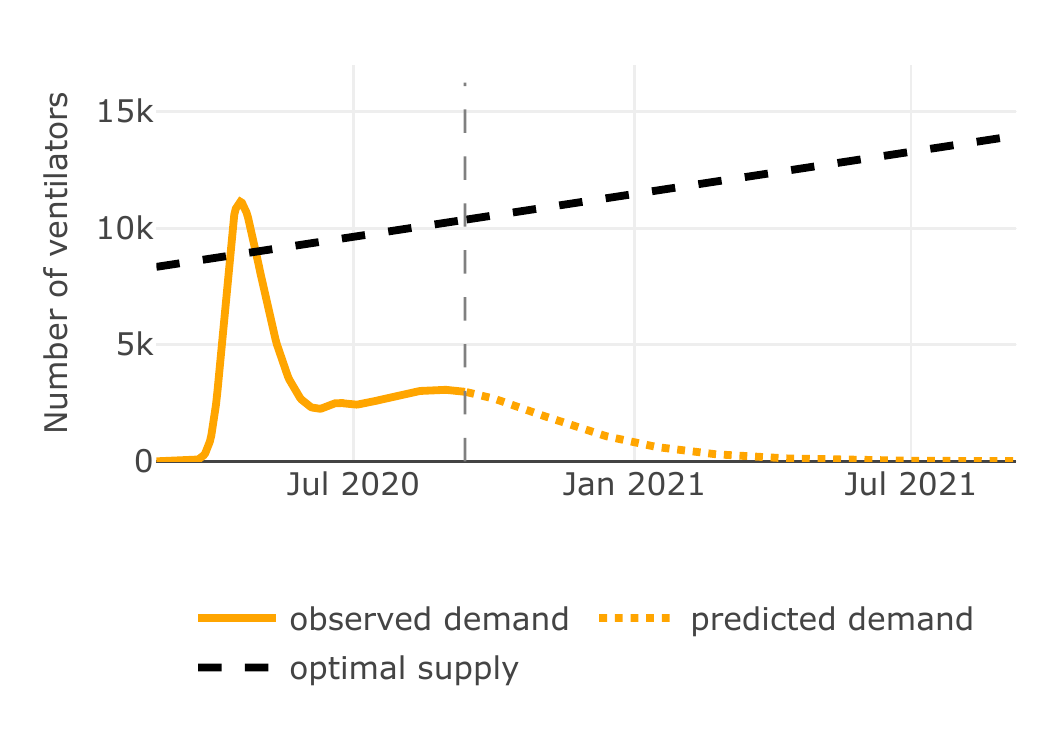}
\caption{$\theta_j^+ = 20\theta_j^-$ (Shortage costs more)}
\label{fig:aggre_vent_supply_optim_asym}
\end{subfigure}
\caption{Optimal initial stockpile size $K_0$ for ventilators according to different weights of economic cost}
\label{fig:aggre_vent_supply2}
\end{figure}

Figure \ref{fig:aggre_vent_supply2} depicts optimal initial stockpile size in the case study. The model parameters are provided in Appendix \ref{sec:para}. Observe that optimal initial stockpiles are chosen to reduce shortage in the early stages and oversupply in the late stages of the pandemic, compared with those strategies shown in Figure \ref{fig:aggre_vent_supply}. When the resource shortage costs the same or less than the resource surplus, Figure \ref{fig:aggre_vent_supply_optim} shows that the strategy requires less initial stockpile due to the excessive amount of supply after the pandemic dies down. In contrast, if the shortage weighs more than the surplus, the strategy is to reduce shortage in early stages at the expense of increasing oversupply in late stages; see Figure \ref{fig:aggre_vent_supply_optim_asym}.



\subsection*{Single-Use Resource: Personal Protective Equipment}
Similar to the case of durable resources, the central authority needs to set up an initial stockpile size $K_0$ of single-use resources such as PPE and make contractual agreements with emergency suppliers, which can provide additional supply at the production rate $a$ units per day. Since PPE is of single-use, during the pandemic, the central authority has to stockpile PPE sets not only for the present but also to potentially deploy them for later time in order to meet surge demand. Therefore, the central authority needs to plan for both initial stockpile size $K_0$ and the amount of distribution $k_j$ to all regions on day $j$. The dynamics of the centralized storage  $\{K_j, j=0, 1, \dots, m\}$ is determined by the recursive relation $K_j=K_{j-1}+a-k_j+\left(k_j-X_j\right)_+$ for $j=1, \dots, m$. The relation can be interpreted as follows. The current stockpile $K_j$ is based off the previous period's stockpile $K_{j-1}$, increased by the net surplus of new supply $a$ less the arranged distribution up to the total demand, $\max\{k_j, X_j\}.$ In other words, if we arrange to distribute $k_j$ units but can only consume $X_j<k_j,$ then the unused amount should count towards the centralized storage for future use.

\begin{figure}[h!]
\centering
\begin{subfigure}{0.5\textwidth}
\centering
\includegraphics[width=\linewidth]{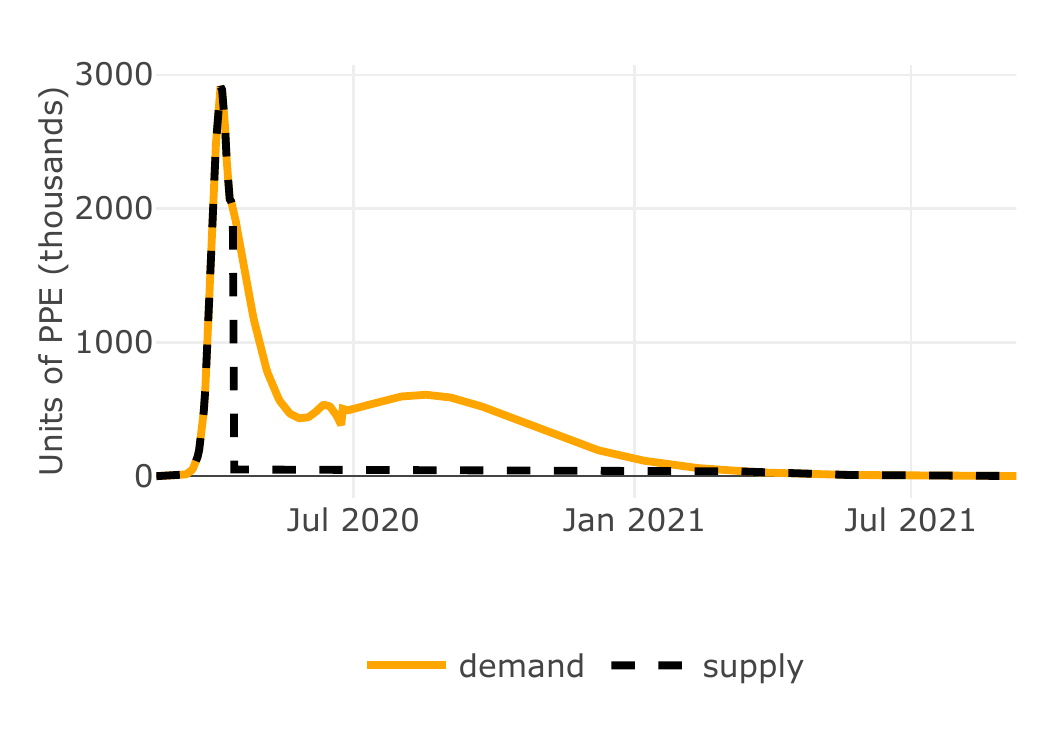}
\caption{Aggressive distribution}
\label{fig:aggre_ppe_supply_lo}
\end{subfigure}%
\begin{subfigure}{0.5\textwidth}
\centering
\includegraphics[width=\linewidth]{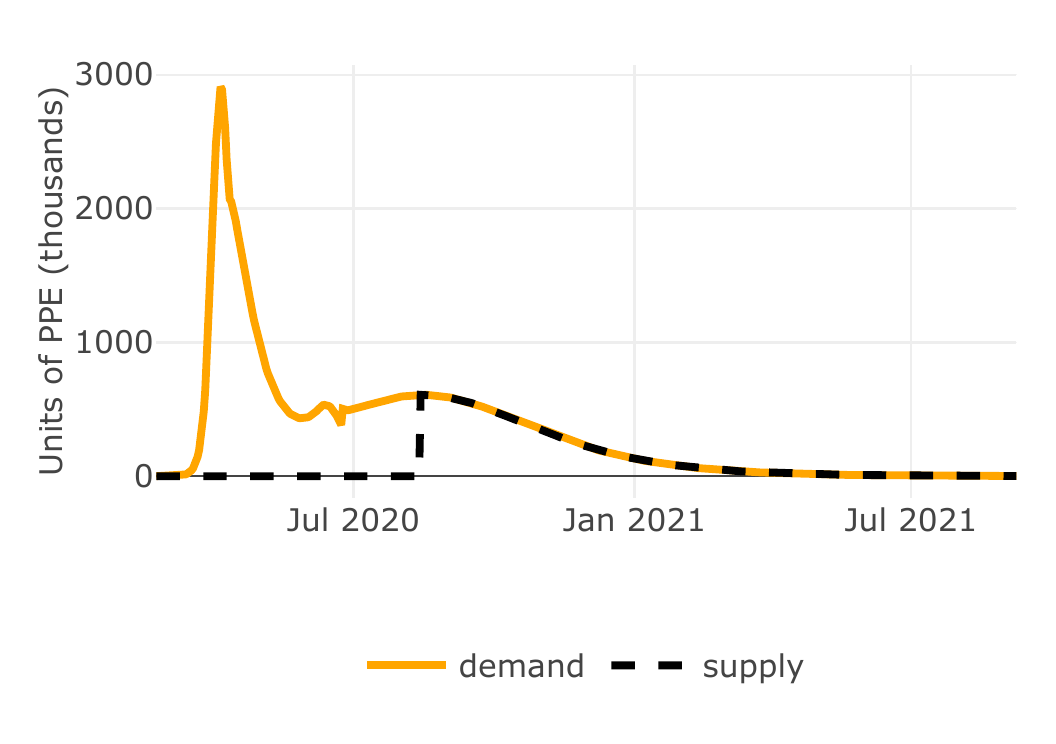}
\caption{Conservative distribution}
\label{fig:aggre_ppe_supply_hi}
\end{subfigure}
\caption{Two extreme scenarios of distribution schedule $k_1,k_2,\dots,k_m$ for personal protective equipment}
\label{fig:aggre_ppe_supply}
\end{figure}


\begin{figure}[h!]
\centering
\includegraphics[width=.5\linewidth]{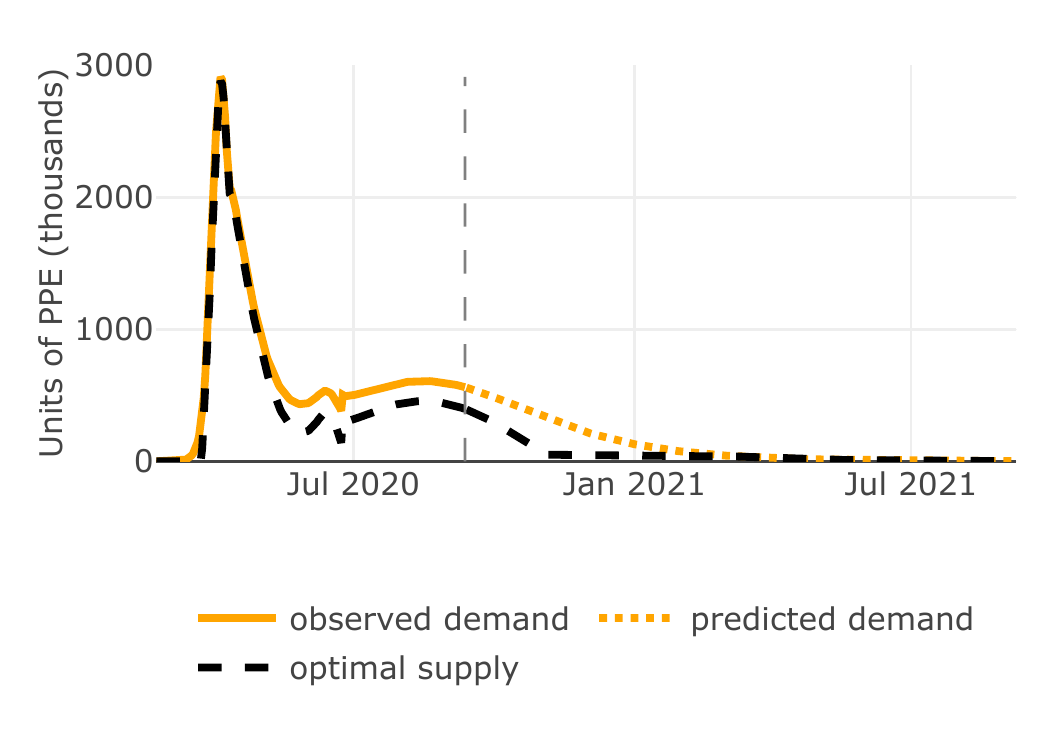}
\caption{Optimal distribution schedule $k_1,k_2,\dots,k_m$ and initial stockpile size $K_0$ of personal protective equipment for the three-state resource pooling alliance}
\label{fig:aggre_ppe_supply_optim}
\end{figure}

Consider two extreme cases in Figures \ref{fig:aggre_ppe_supply_lo} and \ref{fig:aggre_ppe_supply_hi} in the case study. In an extreme case, assume that the central authority decides to distribute as much as possible to meet the demand until the centralized storage is exhausted. This aggressive early distribution strategy is depicted in Figure \ref{fig:aggre_ppe_supply_lo}. After the storage depletion, the system relies only on the new supply, which clearly is not sufficient to meet demands and can cause severe shortage at the time of second peak. In the other extreme, the central authority may choose to hold off dispersing any equipment at all till the point that the storage is believed to be sufficient to cover all future demands. Such a conservative distribution strategy is illustrated in Figure \ref{fig:aggre_ppe_supply_hi}. The challenge with this strategy is that the central authority would have to deal with the repercussion of not providing any assistance in the early stage of the pandemic. Therefore, it is sensible that the central authority develop a distribution schedule that takes a temporal balance of varying needs from all regions. Here we introduce the optimization problem for both an initial stockpile and the distribution schedule of single-use resources.
\begin{align*}
\min_{K_0\geq 0,k_1,\dots,k_m}&\;\sum_{j=1}^{m}\omega_j\left(\frac{\theta^{+} _j}{2}\left(X^{\text{PPE}} _j-k_j\right)_+^2+\frac{\theta^{-} _j}{2}\left(X^{\text{PPE}} _j-k_j\right)_-^2+c_jK_j\right)+c_0K_0\\\text{such that}&\;K_j=K_{j-1}+a-k_j+\left(k_j-X^{\text{PPE}} _j\right)_+\geq 0\text{ and }k_j\geq 0,\quad\text{for }j=1,2,\dots,m,
\end{align*}
where $c_j$ is the centralized cost of possession per unit of PPE per day. It should be pointed out that the centralized storage should be kept non-negative for practical purposes and the distribution amount should also be kept non-negative. Negative distribution could mean confiscation of regional resources for system wide re-distribution, which is not considered in this paper. 

Figure \ref{fig:aggre_ppe_supply_optim} depicts the case of optimal distribution schedule for the three-state alliance. Observe that the optimal supply distribution schedule stays below the trajectory of demand. The slight shortage results from the consideration of the cost of possession. Should the cost of possession be zero, the optimal supply would be to match the demand exactly at all times. The initial stockpile can be set artificially high so that any desired amount can be carried over from period to period and last long enough to support all future demands. In the presence of possession cost, Figure \ref{fig:aggre_ppe_supply_optim} also reveals a distribution strategy that in essence ignores the demands at the start of a pandemic and after the pandemic dies down and instead focuses on meeting demands at the first peak of the pandemic. Keep in mind that the weight of significance $\omega_j$ in this example is set to be proportional to the size of demand. Therefore, the strategy prioritizes meeting the demand in the first peak over other periods due to its high demand. If we were to choose the same weight $\omega_j$ for all periods, the shortage would be more balanced among all periods.

This optimization problem is cumbersome to be solved analytically. However, it is straightforward to show that any oversupply distribution scheme $k_j>X^{\text{PPE}}_j$ must be sub-optimal, and hence the problem can be simplified as follows.
\begin{align*}
\min_{K_0\geq 0,k_1,\dots,k_m}&\;\sum_{j=1}^{m}\omega_j\left(\frac{\theta^{+} _j}{2}\left(X^{\text{PPE}} _j-k_j\right)^2+c_jK_j\right)+c_0K_0\\\text{such that}&\;K_j=K_{j-1}+a-k_j\geq 0\text{ and }0\leq k_j\leq X^{\text{PPE}} _j,\quad\text{for }j=1,2,\dots,m,
\end{align*}
Because of the convexity, it can be solved numerically using Disciplined Convex Programming (DCP), which requires minimal computational time \citep{Grant2006}. The solution shown in Figure \ref{fig:aggre_ppe_supply_optim} is obtained with the help of an R package \code{CVXR} developed based on the DCP method \citep{Fu2020}. 

\section*{Pillar III: Centralized Resources Allocation}\label{sec:allocate}

In the time of severe resources shortage, a coordinated effort becomes necessary to obtain additional supplies and to ration limited existing resources. Existing resources are not necessarily distributed on a basis of health need or justice \citep{TobinTyler2020}. Many hardest hit states have to ration care, while other states have low utilization of their resources. As alluded to earlier, not all regions experienced surge in demands at the same time (see Figure \ref{daily_confirmed_cases}), it has been long argued that United States federal government should have tracked the current use and the projection of needs in all states and coordinated allocation of resources to reduce shortage across regions and over time, during the COVID-19 pandemic \citep{Ranney2020}.

There are two common types of resources allocation problems in the course of a pandemic, both of which can be formulated and cast in the Pillar III of our proposed framework. 

\begin{enumerate}
\item {\it Macro level resources pooling.} A central authority acts in the best interest of a union of many regions to increase supply as well as to coordinate the distribution of existing and additional resources among different regional healthcare providers. 

\item {\it Micro level rationing.} Facing an imbalance of demands and supplies in medical equipment and resources, hospitals often have to make difficult but necessary decisions to ration limited existing resources as well as new supplies.
\end{enumerate}

While in both cases the aim of allocation is to deliver limited resources to where they are needed the most, the macro level pooling large addresses spatio-temporal differences and the micro level rationing focuses on healthcare effectiveness and justice.
The setting of standards, protocols and policies can have profound impact on the functioning of a healthcare system at the time of crisis. Therefore, the best practice of resources allocation should be based on scientific assessment and evaluation rather than on-the-fly ad-hoc decisions and a patchwork of damage-control rules. 

\begin{enumerate}
\item {\it Resources allocation should be based on a holistic approach to address concerns of all stakeholders.}

There are often conflicting interests and priorities for using limited resources. For example, when medical supplies are scarce, many countries, states and cities are competing for resources. While each state acts in its best interest to acquire medical devises and protective equipment, a federal government may see the urgent need to seize control of the cargo to boost a centralized stockpile. A holistic approach aims to strike a balance among different objectives for various stakeholders.

\item {\it  Scientific methods for resources allocation should be developed under a set of optimization objectives, meet certain ethical and humanitarian criteria, take into account logistic and budgetary constraints. }

When a pandemic breaks out, it often spreads from one cluster to another in geographic areas due to its transmission dynamics and affects different sectors of a healthcare system in a chain reaction. Medical needs can vary greatly by demographics and other socio-economics factors. While there is no universal ``one-size-fits-all" solution for allocation problems, there are a set of quantifiable and justifiable criteria. While it is difficult address all of these criteria in a single model, we believe that they can be formulated similarly as in this section.

\begin{itemize}
\item {\bf Minimization of shortage and oversupply} 

Decision makers need to take into account spatio-temporal differences in demand and supply over the course of a pandemic. It is imperative for authorities to allocate more resources to epicenters of a pandemic than other regions under less imminent threat. For example, New York City was the first in the State of New York to witness the COVID-19 pandemic, when other counties had little to no reported cases \citep{Blo}. The State governor issued an executive order to take ventilators and other protective gears from underutilized private hospitals and companies. As infected cases stabilize or even decline in pandemic-savaged regions, a central authority may need to shift its attention to other areas of potential outbreaks and allocate resources in anticipation of new waves. This was also evident when many states in the United States in early stages of outbreak during the COVID-19 took preemptive measures to procure medical supplies from countries like China and South Korea which have developed production capacities after the local epidemics are under control. Therefore, it is sensible to develop an allocation strategy that minimizes shortage and oversupply across different regions and over the life-cycle of a pandemic.

\item {\bf Promoting and rewarding instrumental value}

Critical preventive gears and medical care should be provided first to healthcare workers in the front line, employees in essential businesses and critical infrastructures. Not only because they are at high risk due to their exposure to infectious disease, but also the society bears heavy economic cost when these workers fall ill and are unable to return to work. The lack of sufficient front-line workforce may cause severe disruptions to public services, which can have rippling effect on the rest of the economy. Priority access to medical care can be a critical incentive for retention.

\item {\bf Prioritizing the worst off}

The ultimate goal of a healthcare system is to save lives. Access to critical medicare treatment should be reserved for patients facing life-threatening conditions, when there is an insufficient supply of equipment such as ventilators. 

\item {\bf Maximization of benefit from treatment}

Maximization of the benefit requires prognosis on how patients are likely to survive with treatment. Some recent study of COVID-19 patients in the United States finds that most patients do not survive after being placed on mechanical ventilators \citep{Pre}. To maximize the benefit, access to ventilator treatment should be prioritized for younger patients who  can benefit the most and have the higher chance of survival. For example, many hospitals in Italy lowered the age cutoff from 80 to 75 in order to ration limited ventilators \citep{Rosenbaum2020}. Such a strategy often leads to ethical dilemma when in conflict with prioritizing the worst off.

\end{itemize}

\end{enumerate}

The third pillar of the proposed pandemic risk management framework is to allocate limited resources for different regions, building on the proposed optimal centralized stockpiling and distribution strategies. 
Figures \ref{fig:vent_allo_demands} and \ref{fig:ppe_allo_demands} put the regional resources demand and optimal aggregate supply together for the ease of exposition.

\begin{figure}[h!]
\centering
\begin{subfigure}{.5\textwidth}
\centering
\includegraphics[width=\linewidth]{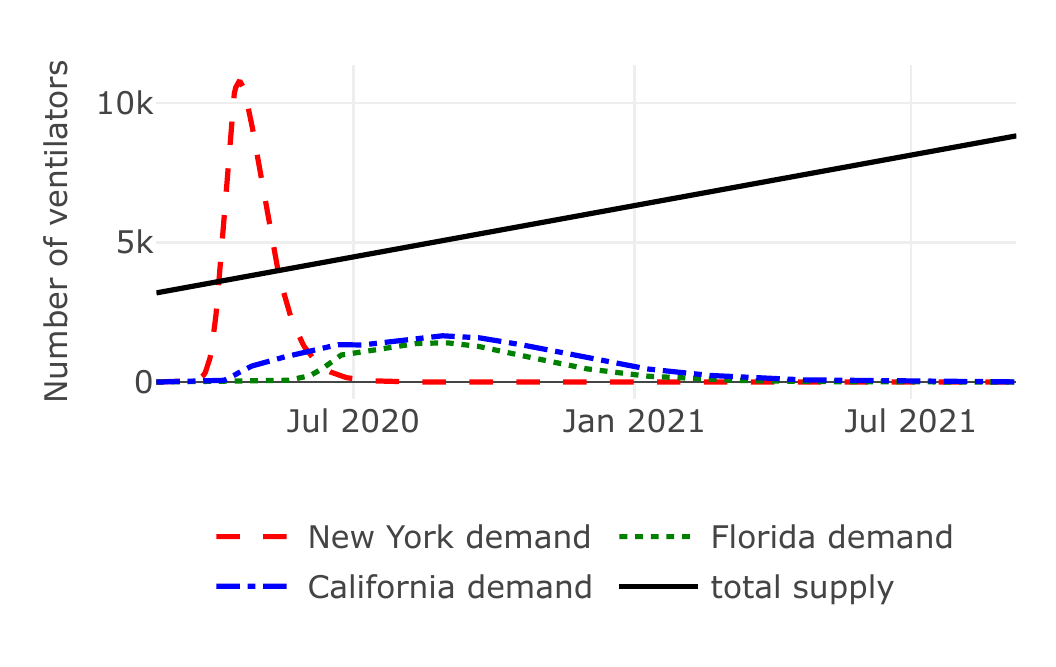}
\caption{Regional ventilator demands}
\label{fig:vent_allo_demands}
\end{subfigure}%
\begin{subfigure}{.5\textwidth}
\centering
\includegraphics[width=\linewidth]{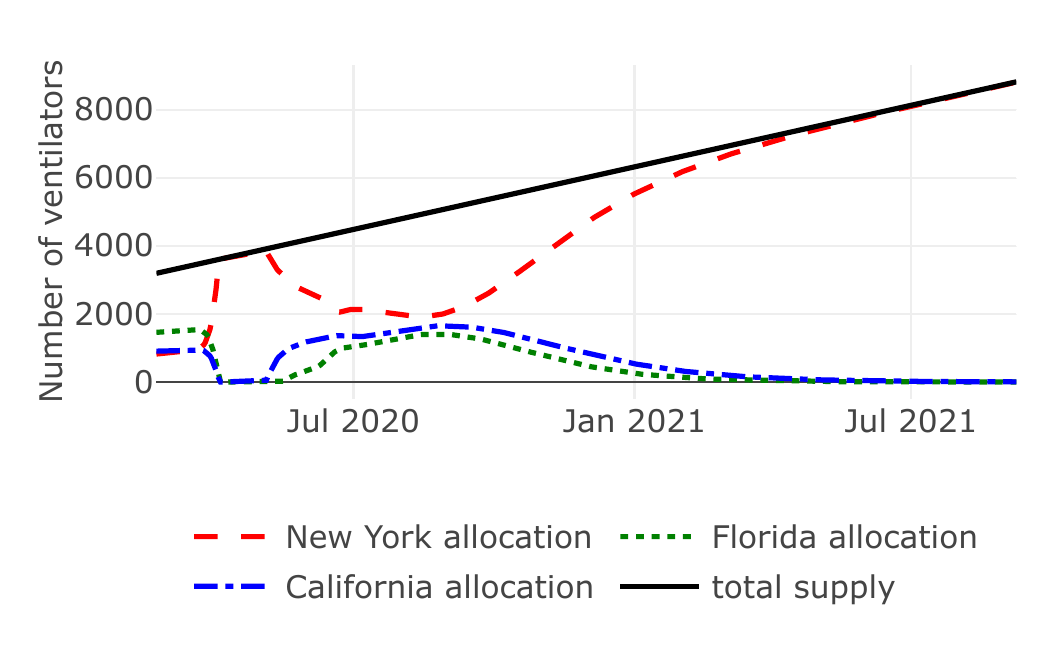}
\caption{Optimal ventilator allocations}
\label{fig:vent_allo_supplies}
\end{subfigure}
\caption{Optimal ventilator allocations in New York, Florida, and California}
\label{fig:vent_allo}
\end{figure}

\begin{figure}[h!]
\centering
\begin{subfigure}{.5\textwidth}
\centering
\includegraphics[width=\linewidth]{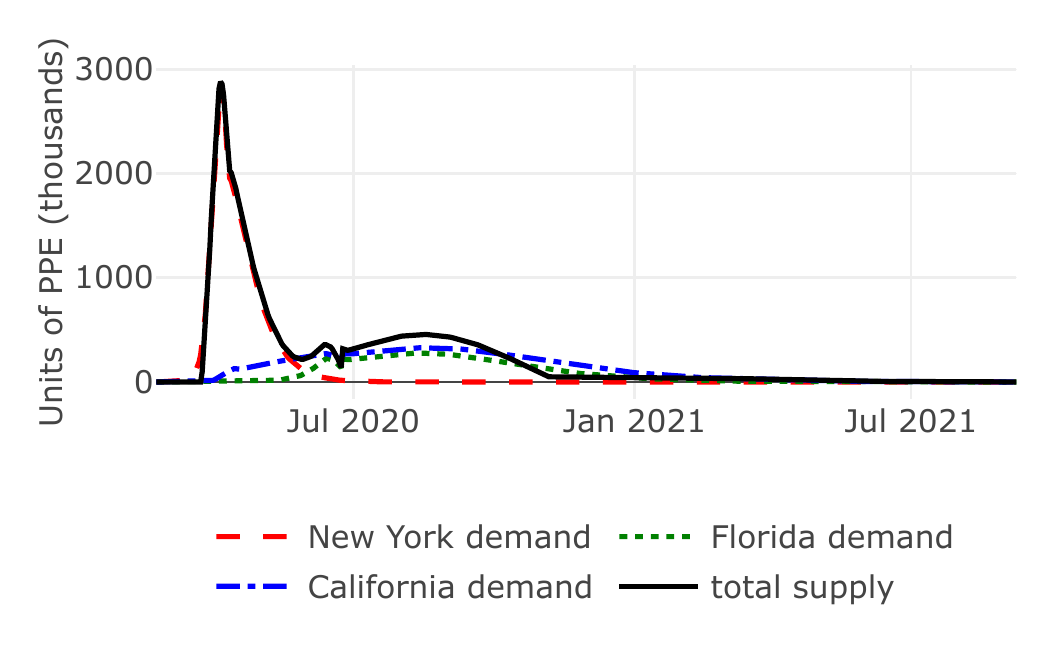}
\caption{Regional PPE demands}
\label{fig:ppe_allo_demands}
\end{subfigure}%
\begin{subfigure}{.5\textwidth}
\centering
\includegraphics[width=\linewidth]{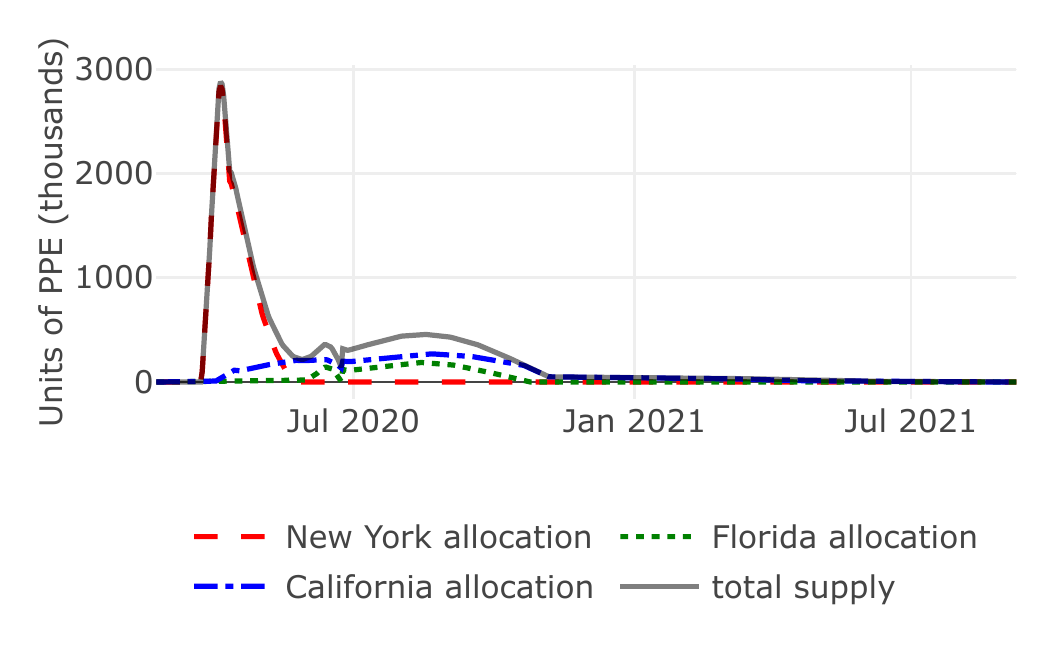}
\caption{Optimal PPE allocations}
\label{fig:ppe_allo_supplies}
\end{subfigure}
\caption{Optimal personal protective equipment allocations in New York, Florida, and California}
\label{fig:ppe_allo}
\end{figure}

\subsection*{Durable Resource: Ventilator}

Throughout this section we consider the allocation of existing resources in a healthcare system with $n$ regions during a pandemic that lasts for $m$ days. We always use the superscript $(i)$ to indicate quantities for the $i$-th region.
Bear in mind that there could still be aggregate shortage of supply for ventilators and PPE sets to all regions in the alliance. The central authority would have to take a holistic view of competing interests of participating regions. On each day in the pandemic, when the aggregate demand exceeds the aggregate supply, the central authority should choose to allocate resources taking into account spatial differences in demand and supply. This motivates the optimization model for ventilator allocations:
\begin{align*}
\underset{j=1,2,\dots,m}{\min_{K_j^{(i)}\geq 0;i=1,2,\dots,n;}}\;\; &\sum_{j=1}^{m}\sum_{i=1}^{n}\omega_j^{(i)}\left(\frac{\theta_j^{+(i)}}{2}\left(X_j^{\text{VEN}(i)}-K_j^{(i)}\right)_+^2+\frac{\theta_j^{-(i)}}{2}\left(X_j^{\text{VEN}(i)}-K_j^{(i)}\right)_-^2\right)\\\text{such that }&\sum_{i=1}^{n}K_j^{(i)}=K_j,\quad\text{for }j=1,2,\dots,m,
\end{align*}
where $\omega_j^{\left(i\right)}$ is a weight to the $j$-th day of the pandemic in the $i$-th allied region, $\theta_j^{+(i)}$ is an economic cost per squared unit of shortage, $\theta_j^{-(i)}$ is an opportunity cost per squared unit of oversupply. The quadratic terms $\theta_j^{\pm (i)}/2\left(X_j^{(i)}-K_j^{(i)}\right)^2$ represents the economic cost due to the demand-supply imbalance. Note that $\theta_j^{\pm(i)}/2$ measures the rate of increase in cost per unit, and hence $\theta_j^{\pm(i)}/2\left(X_j^{(i)}-K_j^{(i)}\right)$ represents the linear variable cost per unit. The variable cost in principle reflects the law of demand that the price increases with the quantity demanded. Therefore, the total cost is the product of variable cost per unit $\theta_j^{\pm(i)}/2\left(X_j^{(i)}-K_j^{(i)}\right)$ and the total unit of imbalance $\left(X_j^{(i)}-K_j^{(i)}\right)$. The economic cost is used to account for both potential loss of lives due to the lack of resources and the opportunity cost of idle medical sources due to oversupply. The structure of economic cost is used not only for its mathematical tractability, but also to penalize large imbalance of demand and supply. The weight $\omega_j^{(i)}$ can be used to measure the relative importance of resource allocation for region $i$ at time $t_j$ to other regions and time points. There are some examples of its applications under various criteria for resource pooling or rationing. For example, in a national contingency planning, where $X^{(i)}$ is used as predicted demand from each region, a metropolitan area with a large population may carry more weight than a rural area with a small region for political reasons. When hospitals have to ration limited resources, they may implement the strategy to maximize the benefit from treatment. In such a setup, $X^{(i)}$ represents the demand from a particular cohort. A decision maker may give higher weight to age cohorts with more remaining life years than age cohorts with less remaining years. It is also a common strategy to give priorities for access to medical resources to healthcare workers. In both cases, the set of weights $\omega^{(i)} _j$ reflects the management's priorities and preferences over time.


The constraint $\sum^n_{i=1} K^{(i)} _j=K_j$ indicates that resources allocated to different regions must add up to the total amount of supply available to the central authority. The evolution of supply $\{K_j, j=1,2,\dots, m\}$ is based on the centralized stockpiling strategy discussed in previous sections. The evolution of demand $\{X^{(i)} _j, i=1,\dots, n, j=1,2,\cdots,m\}$ can be based on forecasts from epidemiological models fitted to most recent local data.


\subsection*{Single-Use Resources: Personal Protective Equipment}

The allocation of single-use resources is similar to that of durable resources. The key difference lies in the amount of stockpile to be released each period. For durable resources, the central authority distributes the accumulated stock $K_j$ at any given time $j$. Because single-use resources cannot be reused, the central authority can only distribute incremental amount according to some distribution schedule. With this difference in mind, we formulate the allocation of single-use resources by an optimization problem.
\begin{align*}
\underset{j=1,2,\dots,m}{\min_{k_j^{(i)}\geq 0;i=1,2,\dots,n;}}\;\; &\sum_{j=1}^{m}\sum_{i=1}^{n}\omega_j^{(i)}\left(\frac{\theta_j^{+(i)}}{2}\left(X_j^{\text{PPE}(i)}-k_j^{(i)}\right)_+^2+\frac{\theta_j^{-(i)}}{2}\left(X_j^{\text{PPE}(i)}-k_j^{(i)}\right)_-^2\right)\\\text{such that }&\sum_{i=1}^{n}k_j^{(i)}=k_j,\quad\text{for }j=1,2,\dots,m.
\end{align*}
Note that the distribution amount $k_j$ could be determined prior to a pandemic by some contingency planning or during a pandemic by an adjusted distribution schedule. The ``Single-Use Resources: Personal Protective Equipment" section offers an example of how such a distribution schedule can be determined to take into temporal competition of single-use resources.


\subsection*{Holistic Allocation Algorithms}

Because both allocation problems take the same form, their analytical solutions can be derived in the same way. Because the allocation is done from period to period in the solution, we shall suppress the subscript $j$ for brevity. To simplify notation in the solution, we will use $X^{(i)}$ without the indicator of resource type for the demand in region $i$, and use $K^{(i)}$ for the quantity of allocated resources in region $i$. 

Here we discuss the analytical solutions to the optimization problems presented above, from which we can glean economic insights. The proofs of these solutions are relegated to Appendix \ref{app:alloc}. 
The central authority has to first determine whether or not there is a system wide surplus or shortage. The allocation strategy differs under these scenarios.

\subsubsection*{System wide surplus}

If there is an overall surplus in the healthcare system at time $j$, i.e., $K>\sum\limits_{r=1}^nX^{(r)}=X$, then only the economic cost for oversupply $\theta^{-(i)} $ applies and the optimal allocation of existing supply to the $i$-th region is given by
\begin{align}\label{fulldur}
    K^{(i)}=\left(1-\frac{\frac{1}{\omega^{(i)}\theta^{-(i)}}}{\sum\limits_{r=1}^{n}\frac{1}{\omega^{(r)}\theta^{-(r)}}}\right)X^{(i)}+\frac{\frac{1}{\omega^{(i)}\theta^{-(i)}}}{\sum\limits_{r=1}^{n}\frac{1}{\omega^{(r)}\theta^{-(r)}}}\left(K-\sum\limits_{r\neq i}X^{(r)}\right), \forall i=1,2,\cdots,n.
\end{align} Observe that the allocation formula \eqref{fulldur} has an explicit economic interpretation, which shows that the optimal supply for region $i$ results from a balance of two competing optimal solutions.
\begin{itemize}
\item {\bf Self-concerned optimal supply:} $X^{(i)}$

If region $i$ can ask for as much as it needs, then this amount shows the ideal supply in the best interest of the region alone. The demand and supply for all other regions are ignored in its consideration.

\item {\bf Altruistic optimal supply:} $K-\sum_{r=1;r\neq i}^{n}X^{(r)}$

If the region $i$ places interests of all other regions above its own, then the medical supply goes to other regions and region $i$ ends up with the leftover amount.
\end{itemize}
The central authority has the responsibility to mediate among regions competing for resources. The formula \eqref{fulldur} indicates that the optimality for region $i$ in consideration of the entire system is the weighted average of two extremes, namely the self-concerned optimal and the altruistic optimal supplies. It should be pointed that the average of two optimal supplies is determined by the harmonic weighting $\frac{1}{\omega^{(i)}\theta^{-(i)}}\Big/\sum\limits_{r=1}^{n}\frac{1}{\omega^{(r)}\theta^{-(r)}}$ as opposed to arithmetic weight $\omega^{(i)}\theta^{-(i)}\Big/\sum\limits_{r=1}^{n} \omega^{(r)}\theta^{-(r)} $. It is known in \citet{Chong2020} that in multi-objective Pareto optimality the harmonic weighting is always used for balancing competing interests of participants in a group whereas the arithmetic weighting serves the purpose of balancing competing objectives of the same participant. The pandemic resource allocation problem is in essence a model of Pareto optimality with regards to competing interests of members in a group. 

An alternative interpretation of formula \eqref{fulldur} can be obtained from the equivalent formula
\begin{equation}
K^{(i)}=X^{(i)}+B^{(i)}\left(K-\sum_{r=1}^{n}X^{(r)}\right),\qquad B^{(i)}=\frac{\frac{1}{\omega^{(i)}\theta^{-(i)}}}{\sum_{r=1}^{n}\frac{1}{\omega^{(r)}\theta^{-(r)}}}.\label{fulldur2}
\end{equation}
It follows from \eqref{fulldur2} that the allocated resource is always presented as an adjustment to the actual demand.  When there is a surplus in the system supply after optimal supplies have been fully distributed to all regions, then additional resource can be made available for region $i$ and each region obtains a portion determined by harmonic weighting. Observe that $\sum^n_{i=1} K^{(i)}=K$ as expected since $\sum^n_{i=1} B^{(i)}=1.$

\subsubsection*{System wide shortage}

If there is an overall shortage in the healthcare system at time $j$,  \textit{i.e.}, $K\le\sum_{r=1}^{n}X^{(r)}=X$, it turns out that the optimal allocation strategy is to deliver the resources where they are needed the most. We can summarize the algorithm in three steps:

\begin{itemize}
    \item[] {\bf Step 1: Demand ranking. } The first order of business is to sort regional demands $\{X^{(i)}, i=1, \dots, n\}$ in a descending order. We use the subscript $[i]$ to indicate the $i$-th largest order statistic, \textit{i.e.}, $X^{[1]}\ge \cdots\ge X^{[n]}\ge 0$. The ranking of regional demands determines the order in which the regions are considered for resource allocation in the next step.
    \item[] {\bf Step 2: Frugality test. }
The algorithm first tests cases that perform allocation rules in a similar way to \eqref{fulldur2}. For any fixed $I=1, \cdots,n$, consider the holistic allocation rule that provides for $I$ regions with largest demand by
\be \tilde{K}^{[i]} =X^{[i]}+\tilde{B}^{[i]}\left(K-\sum_{r=1}^{I}X^{[r]}\right),\qquad \tilde{B}^{[i]}=\frac{\frac{1}{\omega^{[i]}\theta^{+[i]}}}{\sum_{r=1}^{I}\frac{1}{\omega^{[r]}\theta^{+[r]}}}. \label{frgtest}\ee To find the optimal number $I$ of regions to provide support to, the algorithm ensures that the allocation rule should be frugal to meet the following criteria:
    \begin{itemize}
        \item[(i)] The total supply $K$ is only almost enough to meet the demands for all $I$ regions;
        \[K \le \sum_{r=1}^I X^{[r]}.\]
        \item[(ii)] When the allocation rule \eqref{frgtest} is forcefully applied to all regions, the  $I$ regions with highest demands should receive non-negative allocation and the rest of the group receive negative allocation.
        \[\tilde{K}^{[1]}, \cdots, \tilde{K}^{[I]} \ge 0>  \tilde{K}^{[I+1]}, \cdots, \tilde{K}^{[n]}.\]
    \end{itemize}
    There is a unique value of $I$ that passes the frugality test. As the aim of the strategy is to cover as many regions of highest demand as possible, the search algorithm stops after the total demand of $I$ regions exceeds the available supply. The algorithm would reach a rule that can be rewarding for those $I$ regions but discourages allocations to the rest.
\item[] {\bf Step 3: Holistic allocation.} Once the algorithm settles on the value of $I$, all existing resources are divided among the $I$ states according to the holistic allocation principle. In other words, the allocation of supply to the $i$-th region is given by 
\begin{align*}
    K^{[i]}&=\tilde{K}^{[i]},\qquad  \forall i=1,2,\cdots,I;\\
    K^{[I+1]}&=\cdots=K^{[n]}=0.
\end{align*} The general idea of the holistic allocation is illustrated in Figure \ref{fig:hos}. The burden of shortage $Y:=\sum^{I}_{r=1} X^{[r]} -K$ is carried by all $I$ states in proportion to their respective harmonic weight $\tilde{B}^{[i]}$. Therefore, each region receives its demand less its ``fair" portion of system-wide shortage, i.e. $X^{[i]}-\tilde{B}^{[i]}Y.$
\end{itemize}

\begin{figure}[htb]
\begin{center}
\scalebox{0.9}{
\begin{tikzpicture}
\node[text width=2cm] at (-6.7,1.25) {System wide total supply};
\draw (-8,-0.5) rectangle (1,0.5) node[pos=.5] {$K$};
\draw[dashed,pattern=north west lines] (1,-0.5) rectangle (1.7,0.5);
\draw[dashed] (1.7,-0.5) rectangle (5.8,0.5);
\draw[dashed,pattern=vertical lines] (5.8,-0.5) rectangle (6.7,0.5);
\draw[dashed, line width=2pt] (1,-0.5) rectangle (6.7,0.5) node[pos=.5] {Shortage};
\draw (1.35,1.5) circle (0.4cm and 0.4cm) node {$\tilde{B}^{[1]}$};
\draw[->] (1.35,0.5) -- (1.35,1);
\draw (3.7,1.5) node {$\cdots$};
\draw (6.2,1.5) circle (0.4cm and 0.4cm) node {$\tilde{B}^{[I]}$};
\draw[->] (6.2,0.5) -- (6.2,1);

\node[text width=2cm] at (-6.7,-1.25) {Regional demand};
\draw (-8,-3) rectangle (-5,-2) node[pos=.5] {$X^{[1]}$};
\draw (-5,-3) rectangle (4.5,-2) node[pos=.5] {$\dots$};
\draw (4.5,-3) rectangle (6.7,-2) node[pos=.5] {$X^{[I]}$};

\node[text width=2cm] at (-6.7,-3.75) {Allocation};
\draw (-8,-5.5) rectangle (-5.7,-4.5);
\draw[dashed,pattern=north west lines] (-5.7,-5.5) rectangle (-5,-4.5);
\draw (-5.7,-5.5) rectangle (-0.3,-4.5) node[pos=.5] {$\dots$};
\draw (-0.3,-5.5) rectangle (1,-4.5);
\draw[line width=2pt] (-8,-5.5) rectangle (1,-4.5);
\draw[dashed,pattern=vertical lines] (1,-5.5) rectangle (1.9,-4.5);

\draw [decorate,decoration={brace,mirror,amplitude=10pt},xshift=-4pt,yshift=0pt]
(-7.9,-6) -- (-5.6,-6) node [black,midway,yshift=-0.6cm] 
{\footnotesize $K^{[1]} $};
\draw (-3,-6.6) node {$\cdots$};
\draw [decorate,decoration={brace,mirror,amplitude=10pt},xshift=-4pt,yshift=0pt]
(-0.2,-6) -- (1.1,-6) node [black,midway,yshift=-0.6cm] 
{\footnotesize $K^{[I]} $};
\draw [decorate,decoration={brace,mirror,amplitude=10pt},xshift=-4pt,yshift=0pt]
(-7.9,-7) -- (1.1,-7) node [black,midway,yshift=-0.6cm] 
{\footnotesize $K$};

\draw[densely dotted] (-8,-3) -- (-8,-4.5);
\draw[densely dotted] (-5,-3) -- (-5,-4.5);
\draw[dashed, semithick] (-5,-3) -- (-5.7,-4.5);
\draw[densely dotted] (4.5,-3) -- (-0.3,-4.5);
\draw[densely dotted] (6.7,-3) -- (1.9,-4.5);
\draw[densely dotted] (-8,-0.5) -- (-8,-4.5);
\draw[densely dotted] (1,-0.5) -- (1,-4.5);
\draw[dashed, semithick] (6.7,-3) -- (1,-4.5);
\end{tikzpicture}}
\end{center}
\caption{Holistic allocation of resources in face of system wide shortage}
\label{fig:hos}
\end{figure}
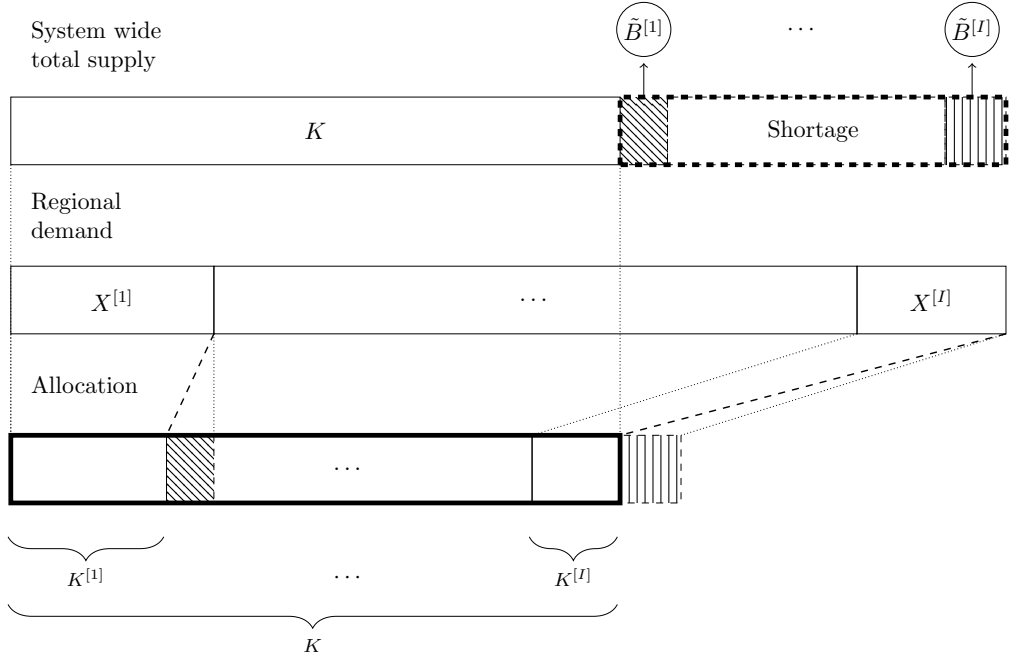

The solution to the three-state alliance for pooling ventilators and PPE sets  are shown in Figures \ref{fig:vent_allo} and \ref{fig:ppe_allo}. Figure \ref{fig:vent_allo_supplies} depicts the case of optimal allocation strategies for ventilators, which confirms the intuition in the ``Case Study" Section. In April, the central authority could have optimally reallocated all available aggregate ventilators in the alliance to NY; This is owing to the fact that NY has the highest demand of all three states. By May and June, ventilators in NY could have gradually reallocated to both FL and CA; in July and August, with the reallocated resources from NY, both FL and CA should have experienced no shortage of ventilators at all. Figure \ref{fig:ppe_allo_supplies} illustrates optimal allocation strategies for PPE sets in the case study. In April, the central authority could distribute stockpiled PPE sets and have all sent to NY for emergency response; As the pandemic dies down for NY and picks up for CA and FL, the resources are more evenly spread in June. By August, all PPE sets should be released to CA and FL both with high demands.

\section*{Conclusions and Limitations} \label{sec:conclusion}
The COVID-19 pandemic has placed extraordinary demands and constraints on public healthcare systems, exposing many problems such as the lack of adequate planning and coordination among others. This paper investigates what could have been done better to reduce the imbalance of medical resources demand and supply. Inspired by classical theory of risk aggregation and capital allocation, this paper proposes a three-pillar resources planning and allocation framework --- demand forecast, centralized stockpiling and distribution, and centralized resources allocation. This paper further develops a novel spatio-temporal balancing of resources and can potentially used by public policymakers as quantitative basis for making informed decisions on planning, funding and rationing of critical resources. 

It should be pointed out that the paper focuses on managerial insights that can be drawn from the optimization framework and its analytical solutions. There are admittedly a number of limitations in the hypothetical example of a three-state resources pooling arrangement. The three states are chosen for the most drastic effects of planning and allocation of existing resources for the purpose of illustration. It is beyond the scope of this paper to consider the political reality that may prevent such arrangements. In theory, the methodology can be applied to the actual voluntary coalition formed by six northwestern states in the US, although the coalition was formed largely to avoid price competition in government procurement. Another limitation of this example is the egalitarian approach to shortages in different regions, which ignores ethical issues that may arise from freely moving resources from one region to another. While it may be economically optimal to deliver all resources in the system to the place where they are needed the most, it may be politically challenging to leave other places with less severe shortage without support. A potential remedy would be to introduce additional constraints in the optimization problems that require some minimal support for each region.



\newpage

\printbibliography

\appendix

\section{Analytical Solutions and Proofs}
\subsection{Stockpiling of durable resources}\label{appendix_b1}
The optimization problem for the stockpiling of durable resources is as follows:
\begin{align*}
    \min_{K_0\ge 0}\sum_{j=1}^{m}\omega_j\left(\frac{\theta_j^+}{2}(X_j-(K_0+aj))_+^2+\frac{\theta_j^-}{2}(X_j-(K_0+aj))_-^2+c_j(K_0+aj)\right)+c_0K_0.
\end{align*}

\begin{thm}
    Let $Y_j=X_j-aj, \forall j=1,2,\cdots,m$. Let $S=\sum_{j=1}^m\omega_jc_{j}+c_0.$ Let $Y_{[1]}\le Y_{[2]}\le \cdots\le Y_{[m]}$ be the increasingly ordered sequence of $Y_1,Y_2,\cdots,Y_m$. Let $J=1,2,\cdots,m$ such that 
    \[\sum_{j=1}^{J-1}\omega_{[j]}{\theta_{[j]}^-}(Y_{[j]}-Y_{[J]})+\sum_{j=J}^{m}\omega_{[j]}{\theta_{[j]}^+}(Y_{[j]}-Y_{[J]})\le S\le \sum_{j=1}^{J-1}\omega_{[j]}{\theta_{[j]}^-}(Y_{[j]}-Y_{[J-1]})+\sum_{j=J}^{m}\omega_{[j]}{\theta_{[j]}^+}(Y_{[j]}-Y_{[J-1]}),\]
    where we define that when $J=1$, 
    \[\sum_{j=1}^{J-1}\omega_{[j]}{\theta_{[j]}^-}(Y_{[j]}-Y_{[J]})+\sum_{j=J}^{m}\omega_{[j]}{\theta_{[j]}^+}(Y_{[j]}-Y_{[J]})=\sum_{j=1}^{m}\omega_{[j]}{\theta_{[j]}^+}(Y_{[j]}-Y_{[1]}),\]
    \[\sum_{j=1}^{J-1}\omega_{[j]}{\theta_{[j]}^-}(Y_{[j]}-Y_{[J-1]})+\sum_{j=J}^{m}\omega_{[j]}{\theta_{[j]}^+}(Y_{[j]}-Y_{[J-1]}) = \infty.\]
    Let \[K_0'=\frac{\sum_{j=1}^{J-1}\omega_{(j)}{\theta_{(j)}^-}Y_{(j)}+\sum_{j=J}^{m}\omega_{(j)}{\theta_{(j)}^+}Y_{(j)}-S}{\sum_{j=1}^{J-1}\omega_{(j)}{\theta_{(j)}^-}+\sum_{j=J}^{m}\omega_{(j)}{\theta_{(j)}^+}}.\]
    If $K_0'<0$, then the optimal initial stockpile, which minimizes the objective function above, $K_0^*=0$, and if $K_0'\ge 0$, then $K_0^*=K_0'$.
\end{thm}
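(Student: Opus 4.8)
The plan is to treat this as a one-dimensional convex minimization in $K_0$ and to locate the minimizer through a first-order analysis. First I would substitute $Y_j = X_j - aj$, so that $X_j - (K_0+aj) = Y_j - K_0$, and rewrite the objective (discarding the additive constant $\sum_j \omega_j c_j aj$, which does not affect the argmin) as
\[
 f(K_0) = \sum_{j=1}^m \omega_j\Big(\tfrac{\theta_j^+}{2}(Y_j - K_0)_+^2 + \tfrac{\theta_j^-}{2}(Y_j - K_0)_-^2\Big) + S\,K_0, \qquad S = \sum_{j=1}^m \omega_j c_j + c_0.
\]
Each of the pieces $\tfrac12(x)_+^2$ and $\tfrac12(x)_-^2$ is convex and $C^1$, so $f$ is convex and continuously differentiable on $\R$; since the $\theta^-$-terms force $f(K_0)\to\infty$ as $K_0\to\infty$, a minimizer over $[0,\infty)$ exists.

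Next I would compute, for $K_0$ away from the points $Y_j$,
\[
 f'(K_0) = -\sum_{j:\,Y_j > K_0}\omega_j\theta_j^+(Y_j - K_0) + \sum_{j:\,Y_j < K_0}\omega_j\theta_j^-(K_0 - Y_j) + S,
\]
and note that $f'$ is continuous, piecewise linear, and non-decreasing (by convexity), with $f'(K_0)\to -\infty$ as $K_0\to-\infty$ and $f'(K_0)\to +\infty$ as $K_0\to+\infty$. Sorting $Y_{[1]}\le\cdots\le Y_{[m]}$, on the interval $(Y_{[J-1]},Y_{[J]})$ one has $\{j:Y_{[j]}<K_0\}=\{1,\dots,J-1\}$ and $\{j:Y_{[j]}>K_0\}=\{J,\dots,m\}$, so $f'$ is affine there with positive slope $\sum_{j=1}^{J-1}\omega_{[j]}\theta_{[j]}^- + \sum_{j=J}^m\omega_{[j]}\theta_{[j]}^+$, and its unique zero in that interval is exactly $K_0'$. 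Evaluating $f'$ at the endpoints and rearranging shows that $f'(Y_{[J]}^-)\ge 0$ is equivalent to the left inequality on $S$ in the statement, and $f'(Y_{[J-1]}^+)\le 0$ to the right one; since $f'$ is non-decreasing, there is an index $J$ (essentially unique) satisfying both, with the stated conventions at $J=1$ matching the half-line $(-\infty,Y_{[1]})$ where effectively $Y_{[0]}=-\infty$. Hence $K_0'$ is the unconstrained minimizer of $f$ on $\R$.

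Finally I would project onto the constraint $K_0\ge 0$: since $f$ is convex on $\R$ with unconstrained minimizer $K_0'$, if $K_0'\ge 0$ then $K_0^*=K_0'$, whereas if $K_0'<0$ then $f$ is increasing on $[0,\infty)$ and $K_0^*=0$, which is exactly the claimed dichotomy. The main obstacle I expect is the bookkeeping around degeneracies rather than anything deep: one must handle ties among the $Y_j$, the endpoint case $J=1$ via the stated conventions, and the borderline situation in which $K_0'$ coincides with some $Y_{[J]}$ (where both indices $J$ and $J+1$ formally pass the test but return the same $K_0'$, so the conclusion is unaffected). Verifying that the two inequalities genuinely pin down $J$ rests on the monotonicity of $f'$ and should be spelled out carefully.
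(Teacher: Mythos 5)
Your proposal is correct and follows essentially the same route as the paper's own proof: rewrite the objective in terms of $Y_j=X_j-aj$ as a convex, piecewise-quadratic function of $K_0$ plus the linear term $S K_0$, locate the unconstrained stationary point via the piecewise-linear, non-decreasing derivative on the sorted intervals $[Y_{[J-1]},Y_{[J]}]$ (the two inequalities on $S$ being exactly the endpoint sign conditions), and then project onto $K_0\ge 0$ by convexity. No substantive gap; the degeneracy bookkeeping you flag (ties, $J=1$ with $Y_{[0]}=-\infty$, borderline $K_0'=Y_{[J]}$) is handled the same way in the paper.
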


\begin{proof}
Let $Y_j=X_j-aj, \forall j=1,2,\cdots,m$.
Let $Y_{[1]}\le Y_{[2]}\le \cdots\le Y_{[m]}$ be the increasingly ordered sequence of $Y_1,Y_2,\cdots,Y_m$. In that case, $Y_j$ represents the daily shortage when the initial stockpile is completely missing. Then the objective function becomes
\begin{equation*}F(K_0):=\sum_{j=1}^{m}\omega_{[j]}\left(\frac{\theta_{[j]}^+}{2}(Y_{[j]}-K_0)_+^2+\frac{\theta_{[j]}^-}{2}(K_0-Y_{[j]})_+^2\right)+\left(\sum_{j=1}^m\omega_jc_{j}+c_0\right)K_0+\sum_{j=1}^m\omega_jc_jaj.\\
\end{equation*}
Note that $F(K_0)$ is a convex function in $K_0$ for any $K_0\in \R$.
Let 
\begin{align*}
G(K_0)&=\sum_{j=1}^{m}\omega_{[j]}\left(\frac{\theta_{[j]}^+}{2}(Y_{[j]}-K_0)_+^2+\frac{\theta_{[j]}^-}{2}(K_0-Y_{[j]})_+^2\right),\\
S&=\sum_{j=1}^m\omega_jc_{j}+c_0,
\end{align*}
where $G(\cdot)$ is convex in $K_0$.

The first order derivatives of $G$ at $Y_{[1]}$ and $Y_{[m]}$ are as follows:

\[G'(Y_{[1]})=\sum_{j=1}^{m}\omega_{[j]}{\theta_{[j]}^-}(Y_{[1]}-Y_{[j]})\le 0.\]
\[G'(Y_{[m]})=\sum_{j=1}^{m}\omega_{[j]}{\theta_{[j]}^+}(Y_{[j]}-Y_{[m]})(-1)=\sum_{j=1}^{m}\omega_{[j]}{\theta_{[j]}^+}(Y_{[m]}-Y_{[j]})\ge 0.\]
Then $\exists$ $\tilde{K_0}\in\left[Y_{[1]},Y_{[m]}\right]$ such that $G'(\tilde{K_0})=0$. Since $S = \sum_{j=1}^{m}\omega_jc_j+c_0\ge 0$, $\exists$ $K_0'\in(-\infty,Y_{[m]}]$ such that $F'(K_0')=0$, \textit{i.e.}, $\exists$ $J=1,\cdots,m$ such that $K_0'\in\left[Y_{[J-1]},Y_{[J]}\right]$, where we define $Y_{(0)}=-\infty$, and $F'(K_0')=0$.

Here, we are relaxing the constraint that the initial stockpile has to be non-negative, but we will add it back in the end. The key observation is that given all the $(Y_{[j]})_{j \in \{0,\dots, m\}}$, we can always find a $K_0'$ between two adjacent $Y_{[j]}$'s, $Y_{[J-1]}$ and $Y_{[J]}$, that minimize the objective function.
That is, \[F'(K_0')=\sum_{j=1}^{J-1}\omega_{[j]}{\theta_{[j]}^-}(K_0'-Y_{[J]})+\sum_{j=J}^{m}\omega_{[j]}{\theta_{[j]}^+}(Y_{[j]}-K_0')(-1)+S=0.\]
Then \[K_0'=\frac{\sum_{j=1}^{J-1}\omega_{[j]}{\theta_{[j]}^-}Y_{[j]}+\sum_{j=J}^{m}\omega_{[j]}{\theta_{[j]}^+}Y_{[j]}-S}{\sum_{j=1}^{J-1}\omega_{[j]}{\theta_{[j]}^-}+\sum_{j=J}^{m}\omega_{[j]}{\theta_{[j]}^+}}.\]

Therefore, the condition that $K_0'\in[Y_{[J-1]},Y_{[J]}]$ is given by 
\[Y_{[J-1]}\le \frac{\sum_{j=1}^{J-1}\omega_{[j]}{\theta_{[j]}^-}Y_{[j]}+\sum_{j=J}^{m}\omega_{[j]}{\theta_{[j]}^+}Y_{[j]}-S}{\sum_{j=1}^{J-1}\omega_{[j]}{\theta_{[j]}^-}+\sum_{j=J}^{m}\omega_{[j]}{\theta_{[j]}^+}} \le Y_{[J]}.\]
Or equivalently, $\exists$ $J=1,2,\cdots,m$, such that 
\[\sum_{j=1}^{J-1}\omega_{[j]}{\theta_{[j]}^-}(Y_{[j]}-Y_{[J]})+\sum_{j=J}^{m}\omega_{[j]}{\theta_{[j]}^+}(Y_{[j]}-Y_{[J]})\le S\le \sum_{j=1}^{J-1}\omega_{[j]}{\theta_{[j]}^-}(Y_{[j]}-Y_{[J-1]})+\sum_{j=J}^{m}\omega_{[j]}{\theta_{[j]}^+}(Y_{[j]}-Y_{[J-1]}),\] 
wherein if $J=1$, we define
    \[\sum_{j=1}^{J-1}\omega_{[j]}{\theta_{[j]}^-}(Y_{[j]}-Y_{[J]})+\sum_{j=J}^{m}\omega_{[j]}{\theta_{[j]}^+}(Y_{[j]}-Y_{[J]})=\sum_{j=1}^{m}\omega_{[j]}{\theta_{[j]}^+}(Y_{[j]}-Y_{[1]}),\]
    \[\sum_{j=1}^{J-1}\omega_{[j]}{\theta_{[j]}^-}(Y_{[j]}-Y_{[J-1]})+\sum_{j=J}^{m}\omega_{[j]}{\theta_{[j]}^+}(Y_{[j]}-Y_{[J-1]}) = \infty.\]
    
Finally, given the non-negativity of stockpile, if $K_0'<0$, then the optimal initial stockpile $K_0^*=0$, and if $K_0'\ge 0$, then $K_0^*=K_0'$.
\end{proof}

\subsection{Centralized resources allocation} \label{app:alloc}
The optimization problem for allocating resources amount regions is as follows:
\begin{align*}
\underset{j=1,2,\dots,m}{\min_{K_j^{(i)}\geq 0;i=1,2,\dots,n;}}\;\; &\sum_{j=1}^{m}\sum_{i=1}^{n}\omega_j^{(i)}\left(\frac{\theta_j^{+(i)}}{2}\left(X_j^{(i)}-K_j^{(i)}\right)_+^2+\frac{\theta_j^{-(i)}}{2}\left(X_j^{(i)}-K_j^{(i)}\right)_-^2\right)\\\text{such that }&\sum_{i=1}^{n}K_j^{(i)}=K_j,\quad\text{for }j=1,2,\dots,m,
\end{align*}

\begin{thm} The optimization above is done from period to period, and thus to simplify notation, the time indicator $j$ can be dropped at each time point. let $X^{[1]}\ge \dots\ge X^{[n]} > 0$ be the decreasingly ordered sequence of $X^{(1)}, X^{(2)}, \dots, X^{(n)}$. 

If $K>\sum\limits_{r=1}^nX^{(r)}=X$, then
\begin{align*}
    K^{(i)}=\left(1-\frac{\frac{1}{\omega^{(i)}\theta^{-(i)}}}{\sum\limits_{r=1}^{n}\frac{1}{\omega^{(r)}\theta^{-(r)}}}\right)X^{(i)}+\frac{\frac{1}{\omega^{(i)}\theta^{-(i)}}}{\sum\limits_{r=1}^{n}\frac{1}{\omega^{(r)}\theta^{-(r)}}}\left(K-\sum\limits_{r\neq i}X^{(r)}\right), \forall i=1,2,\dots,n.
\end{align*}

If $K\le\sum_{r=1}^{n}X^{(r)}=X$, we can find an $I=1,2,\dots,n$ such that
$K\le \sum_{r=1}^{I}X^{[r]}$,
\begin{align*}
X^{[i]}&\ge \frac{\frac{1}{\omega^{[i]}\theta^{+[i]}}}{\sum\limits_{r=1}^{I}\frac{1}{\omega^{[r]}\theta^{+[r]}}}\left(\sum\limits_{r=1}^IX^{[r]}-K\right),\forall i=1,\dots,I,\\
X^{[i]}&<\frac{\frac{1}{\omega^{[i]}\theta^{+[i]}}}{\sum\limits_{r=1}^{I}\frac{1}{\omega^{[r]}\theta^{+[r]}}}\left(\sum\limits_{r=1}^IX^{[r]}-K\right), \forall i=I+1,I+2,\dots,n.
\end{align*}
Then $K^{[I+1]}=\dots=K^{[n]}=0$. 
\begin{align*}
    K^{[i]}=\left(1-\frac{\frac{1}{\omega^{[i]}\theta^{+[i]}}}{\sum\limits_{r=1}^{I}\frac{1}{\omega^{[r]}\theta^{+[r]}}}\right)X^{[i]}+\frac{\frac{1}{\omega^{[i]}\theta^{+[i]}}}{\sum\limits_{r=1}^{I}\frac{1}{\omega^{[r]}\theta^{+[r]}}}\left(K-\sum\limits_{r=1,r\neq i}^IX^{[r]}\right), \forall i=1,\dots,I.
\end{align*}
\end{thm}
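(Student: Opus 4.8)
The plan is to treat this as a convex optimization problem with a single linear equality constraint and solve it via Lagrange multipliers, splitting into the two regimes according to whether the system is in surplus or shortage. First I would observe that for fixed $j$ the objective decouples across the time index, so it suffices to analyze one period; dropping $j$, the objective $\sum_{i=1}^n \omega^{(i)}\big(\tfrac{\theta^{+(i)}}{2}(X^{(i)}-K^{(i)})_+^2 + \tfrac{\theta^{-(i)}}{2}(X^{(i)}-K^{(i)})_-^2\big)$ is convex and continuously differentiable in $(K^{(1)},\dots,K^{(n)})$ (the one-sided quadratics glue $C^1$-smoothly at $K^{(i)}=X^{(i)}$, with derivative $0$ there), and the feasible set $\{K^{(i)}\ge 0,\ \sum_i K^{(i)}=K\}$ is a nonempty compact convex polytope, so a minimizer exists and the KKT conditions are necessary and sufficient.

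For the surplus case $K>\sum_r X^{(r)}$, the plan is to guess that the constraint $K^{(i)}\ge 0$ is slack at the optimum (which will be verified a posteriori, since each $K^{(i)}$ turns out to exceed $X^{(i)}\wedge(\text{something positive})$) and that every region lands in the oversupply branch $K^{(i)}\ge X^{(i)}$. Under that ansatz the gradient condition reads $-\omega^{(i)}\theta^{-(i)}(X^{(i)}-K^{(i)}) = \lambda$ for a common multiplier $\lambda$, i.e. $K^{(i)} = X^{(i)} + \lambda/(\omega^{(i)}\theta^{-(i)})$; summing over $i$ and using $\sum_i K^{(i)}=K$ pins down $\lambda = (K-\sum_r X^{(r)})\big/\sum_r \tfrac{1}{\omega^{(r)}\theta^{-(r)}}$, which is positive exactly because we are in surplus, and substituting back gives \eqref{fulldur}. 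One then checks $K^{(i)}\ge X^{(i)}>0$ consistent with both the branch assumption and nonnegativity.

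For the shortage case $K\le \sum_r X^{(r)}$ the active-set structure is the real content. The plan is: sort demands as $X^{[1]}\ge\cdots\ge X^{[n]}$, and argue that the regions receiving zero allocation must be those with the smallest demands (a swapping/exchange argument: if a low-demand region got a positive allocation while a higher-demand region with the same-or-worse shortage branch got less, transferring resources toward the higher-demand region strictly decreases the convex objective, using monotonicity of the quadratic shortage cost in the shortfall). So the optimal support is $\{[1],\dots,[I]\}$ for some $I$; on this support, assuming each such region is in the shortage branch $K^{[i]}\le X^{[i]}$ (true since total supply is too small to satisfy even these $I$ regions once $I$ is chosen minimally as below), the stationarity condition $-\omega^{[i]}\theta^{+[i]}(X^{[i]}-K^{[i]}) = -\mu$ gives $K^{[i]} = X^{[i]} - \mu/(\omega^{[i]}\theta^{+[i]})$; imposing $\sum_{i=1}^I K^{[i]} = K$ solves for $\mu = (\sum_{r=1}^I X^{[r]} - K)\big/\sum_{r=1}^I \tfrac{1}{\omega^{[r]}\theta^{+[r]}}\ge 0$ and yields the stated formula. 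The remaining task is to identify the correct $I$: one shows that the displayed inequalities — $K\le\sum_{r=1}^I X^{[r]}$ together with $X^{[i]}\ge \tilde B^{[i]}(\sum_{r=1}^I X^{[r]}-K)$ for $i\le I$ and the reverse for $i>I$ — are precisely the requirements that (a) the purported $K^{[i]}$ are nonnegative on the support and (b) the KKT multiplier signs for the inactive regions are satisfied, and that exactly one $I$ meets them; monotonicity in $I$ of the candidate threshold makes the search well-defined and the choice unique.

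The main obstacle I anticipate is the shortage case's combinatorial bookkeeping: rigorously proving that the zero-allocation set is an \emph{upper} tail of the sorted demands, and that the frugality inequalities single out a \emph{unique} feasible $I$, rather than merely exhibiting one $I$ that works. The cleanest route is probably to parametrize by a candidate multiplier $\mu\ge 0$, define $K^{(i)}(\mu) = (X^{(i)} - \mu/(\omega^{(i)}\theta^{+(i)}))_+$, note $\sum_i K^{(i)}(\mu)$ is continuous and nonincreasing in $\mu$ with value $\sum_i X^{(i)}\ge K$ at $\mu=0$ and $0$ as $\mu\to\infty$, invoke the intermediate value theorem to get the unique $\mu^\ast$ with $\sum_i K^{(i)}(\mu^\ast)=K$, and then read off $I=\#\{i: K^{(i)}(\mu^\ast)>0\}$, verifying this $(K^{(i)}(\mu^\ast))$ satisfies KKT and hence is the optimum by convexity. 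Everything else is routine differentiation and algebraic substitution.
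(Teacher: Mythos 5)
Your proposal is correct in substance, but it reaches the shortage-case allocation by a genuinely different route than the paper. The paper's proof is an exhaustive KKT case analysis: it first rules out coexisting regional surpluses and shortages, then in the shortage regime introduces multipliers $\lambda_1^{(i)}$ (for $K^{(i)}\ge 0$) and $\lambda_2^{(i)}$ (for $K^{(i)}\le X^{(i)}$), works through the four sign patterns, eliminates two by contradiction, and merges the remaining two into the stated formula, only relabelling to the demand ordering at the very end. Your preferred route --- parametrize $K^{(i)}(\mu)=\bigl(X^{(i)}-\mu/(\omega^{(i)}\theta^{+(i)})\bigr)_+$, use continuity and monotonicity in $\mu$ plus the intermediate value theorem to pin down $\mu^\ast$ with $\sum_i K^{(i)}(\mu^\ast)=K$, read off the active set, and verify KKT with convexity giving sufficiency --- is a constructive water-filling argument that avoids the paper's combinatorial bookkeeping and delivers existence and uniqueness of the active set essentially for free; the surplus case is handled identically in both (common multiplier, a posteriori branch verification).

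One caveat you should repair before writing this up: your first-listed plan for the shortage case, the exchange argument that the zero-allocation regions ``must be those with the smallest demands,'' fails as stated when the products $\omega^{(i)}\theta^{+(i)}$ differ across regions. The marginal reduction in cost from an extra unit in region $i$ is $\omega^{(i)}\theta^{+(i)}\bigl(X^{(i)}-K^{(i)}\bigr)$, so the priority index singled out by your own $\mu^\ast$ is $\omega^{(i)}\theta^{+(i)}X^{(i)}$, not $X^{(i)}$ alone, and the support $\{i:\omega^{(i)}\theta^{+(i)}X^{(i)}>\mu^\ast\}$ need not be a prefix of the demand-sorted list. Your multiplier parametrization does not depend on that claim, so the optimization is solved correctly; but to match the theorem's phrasing (support equal to the $I$ largest demands, with the frugality inequalities written in the demand ordering) you must either observe that those inequalities are exactly the conditions $\omega^{[i]}\theta^{+[i]}X^{[i]}\ge\mu^\ast$ for $i\le I$ and $<\mu^\ast$ for $i>I$, or assume the demand ordering agrees with the $\omega\theta^{+}X$ ordering --- a point the paper's final relabelling step also passes over quickly.
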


\begin{proof}
At each time point $j$, we want to solve the following optimization problem,
\begin{align*}
    \min_{K^{(i)};i=1,2,\cdots,n}&\sum_{i=1}^{n}\omega^{(i)}\left(\frac{\theta^{+(i)}}{2}(X^{(i)}-K^{(i)})^2_++\frac{\theta^{-(i)}}{2}(K^{(i)}-X^{(i)})^2_+\right)\\
    \text{such that }\ &\sum_{i=1}^{n}K^{(i)}=K, K^{(i)}\ge 0, \forall i=1,2,\cdots,n.
\end{align*}

\begin{casesp}
\item \textbf{Coexisting surpluses and shortages} 

First, let us consider the case in which some regions are having surpluses, whereas other regions are experiencing shortages at the same time. We shall show below that this case is impossible regardless of the non-negative constraints.
That is, suppose that $K^{(1)},K^{(2)},\dots,K^{(n)}$ lie locally in a feasible set,
such that $I$ of them satisfy $K^{(i)}>X^{(i)}$, where $I =1,2,\dots,n-1$. The remaining  $n-I$ of them satisfy $K^{(i)}\leq X^{(i)}$. Without loss of generality, assume the first $I$ of $K^{(i)}$ are in the former group. Then the local problem becomes
\begin{align*}
    \min_{K^{(i)};i=1,2,\cdots,n}&\sum_{i=I+1}^{n}\omega^{(i)}\frac{\theta^{+(i)}}{2}(X^{(i)}-K^{(i)})^2+\sum_{i=1}^{I}\omega^{(i)}\frac{\theta^{-(i)}}{2}(X^{(i)}-K^{(i)})^2\\ \text{such that }\ &\sum_{i=1}^{n}K^{(i)}=K. 
\end{align*}

The solution to this problem is given by \begin{align*}
    K^{(i)}=\left(1-\frac{\frac{1}{\omega^{(i)}\theta^{-(i)}}}{\sum\limits_{r=1}^{I}\frac{1}{\omega^{(r)}\theta^{-(r)}}+\sum\limits_{r=I+1}^n\frac{1}{\omega^{(r)}\theta^{+(r)}}}\right)X^{(i)}+\frac{\frac{1}{\omega^{(i)}\theta^{-(i)}}}{\sum\limits_{r=1}^{I}\frac{1}{\omega^{(r)}\theta^{-(r)}}+\sum\limits_{r=I+1}^n\frac{1}{\omega^{(r)}\theta^{+(r)}}}\left(K-\sum\limits_{r\neq i}X^{(r)}\right),
\end{align*}
for $i =1,2,\dots,I$, and 
\begin{align*}
    K^{(i)}=\left(1-\frac{\frac{1}{\omega^{(i)}\theta^{+(i)}}}{\sum\limits_{r=1}^{I}\frac{1}{\omega^{(r)}\theta^{-(r)}}+\sum\limits_{r=I+1}^n\frac{1}{\omega^{(r)}\theta^{+(r)}}}\right)X^{(i)}+\frac{\frac{1}{\omega^{(i)}\theta^{+(i)}}}{\sum\limits_{r=1}^{I}\frac{1}{\omega^{(r)}\theta^{-(r)}}+\sum\limits_{r=I+1}^n\frac{1}{\omega^{(r)}\theta^{+(r)}}}\left(K-\sum\limits_{r\neq i}X^{(r)}\right),
\end{align*}
for $i =I+1,\dots,n$. 

However, $\forall i = 1,2\dots,I$, $K^{(i)}>X^{(i)}$, and $\forall i = I+1,\dots,n$, $ K^{(i)}\le X^{(i)}$, which implies ${K>\sum_{r=1}^nX^{(r)}}$ and $K\le\sum_{r=1}^nX^{(r)}$, and thus we have a contradiction. This result shows that it is impossible for some regions to have surpluses while other regions are experiencing shortages. Therefore, it suffices to only consider the scenarios that there is a system wide surplus and that there is a system wide shortage.

\item \textbf{System wide surplus}

If there is a system wide surplus, \textit{i.e.}, all regions have surpluses, then the problem becomes 
\begin{align*}
    \min_{K^{(i)};i=1,2,\cdots,n}&\sum_{i=1}^{n}\omega^{(i)}\frac{\theta^{-(i)}}{2}(K^{(i)}-X^{(i)})^2\\ \text{such that }\ &\sum_{i=1}^{n}K^{(i)}=K, \\
    & K^{(i)}>X^{(i)}, \forall i=1,\dots,n.
\end{align*}
The solution is given by
\begin{align*}
    K^{(i)}=\left(1-\frac{\frac{1}{\omega^{(i)}\theta^{-(i)}}}{\sum\limits_{r=1}^{n}\frac{1}{\omega^{(r)}\theta^{-(r)}}}\right)X^{(i)}+\frac{\frac{1}{\omega^{(i)}\theta^{-(i)}}}{\sum\limits_{r=1}^{n}\frac{1}{\omega^{(r)}\theta^{-(r)}}}\left(K-\sum\limits_{r\neq i}X^{(r)}\right).
\end{align*}

For this result to hold, we only need the condition that $K>\sum_{r=1}^nX^{(r)}=X$. 
Due to uniqueness, under this condition, this $K^{(i)}$ is optimal.

\item \textbf{System wide shortage}

It remains to solve the case that there is a system wide shortage, \textit{i.e.}, all regions have shortages. In that case, the problem becomes 
\begin{align*}
    \min_{K^{(i)};i=1,2,\cdots,n}&\sum_{i=1}^{n}\omega^{(i)}\frac{\theta^{+(i)}}{2}(X^{(i)}-K^{(i)})^2\\ \text{such that }\ &\sum_{i=1}^{n}K^{(i)}=K; \\
    &0 \le K^{(i)}\le X^{(i)}, \forall i=1,\cdots,n.
\end{align*}
Apart from the given condition that $\sum_{i=1}^nK^{(i)}=K$, there are additional inequality constraints in this optimization problem. Hence, we make use of Karush-Kuhn–Tucker (KKT) conditions as follows.
\begin{align*}
    &\omega^{(i)}\theta^{+(i)}(K^{(i)}-X^{(i)})-\lambda_1^{(i)}+\lambda_2^{(i)}+\mu=0,\\
    &0\le K^{(i)}\le X^{(i)},\\
    &\lambda_1^{(i)}\ge 0, \quad \lambda_1^{(i)}K^{(i)}=0,\\
    &\lambda_2^{(i)}\ge 0, \quad \lambda_2^{(i)}(K^{(i)}-X^{(i)})=0,
\end{align*}
for all $i = 1,\cdots, n$. Regarding the values of $\lambda_1^{(i)}$ and $\lambda_2^{(i)}$, we can consider the following four mutually exclusive cases,
\begin{enumerate}
    \item[(i)] $\lambda_1^{(i)}>0$ for some $i = 1, \cdots, n$, and $\lambda_2^{(i)}>0$ for some $i = 1, \cdots, n$;
    \item[(ii)] $\lambda_1^{(i)}=0$ for all $i = 1, \cdots, n$, and $\lambda_2^{(i)}>0$ for some $i = 1, \cdots, n$;
    \item[(iii)] $\lambda_1^{(i)}>0$ for some $i = 1, \cdots, n$, and $\lambda_2^{(i)}=0$ for all $i = 1, \cdots, n$;
    \item[(iv)] $\lambda_1^{(i)}=0$ for all $i = 1, \cdots, n$, and $\lambda_2^{(i)}=0$ for all $i = 1, \cdots, n$.
\end{enumerate}
Each of them will be discussed as follows. 
\begin{casesp}
    \item $\lambda_1^{(i)}>0$ for some $i = 1, \cdots, n$, and $\lambda_2^{(i)}>0$ for some $i = 1, \cdots, n$.
    
    We will show that this case will lead to a contradiction, and thus is impossible. Because the ordering of $\lambda_1^{(i)}$ and $\lambda_2^{(i)}$ does not affect the conditions, for simplicity, we rearrange them in such a way that there is an $I = 1,2,\dots, n-1$, and an $\tilde{I} = 1,2,\dots, n-1$, for which 

\begin{align*}
    \lambda_1^{[i]}>0&, \quad \forall i=1,\dots, I, \\
    \lambda_1^{[i]}=0&, \quad \forall i=I+1,\dots, n, \\
    \lambda_2^{[i]}=0&, \quad \forall i=1,\dots, n-\tilde{I}, \\
    \lambda_2^{[i]}>0&, \quad \forall i = n-\tilde{I}+1,\dots, n,
\end{align*}
where $[i]$ are indices after the rearrangement.

We also have the condition that $I\le n-\tilde{I}$, because for each $i$, $\lambda_1^{[i]} > 0$ implies $K^{[i]} = 0$, which further implies $\lambda_2^{[i]}=0$. Therefore, the number of $\lambda_2^{[i]}$ that are equal to 0, \textit{i.e.}, $n-\tilde{I}$, is at least the number of $\lambda_1^{[i]}$ that are greater than 0, \textit{i.e.}, $I$. Then by the complementary slackness conditions in the KKT conditions above,  $K^{[i]}=0, \forall i=1,2,\cdots,I$ and,  $K^{[i]}=X^{[i]}, \forall i=n-\tilde{I}+1,\cdots,n$.

Then, the KKT conditions are simplified.
\begin{align*}
    -\omega^{[i]}\theta^{+[i]}X^{[i]}-\lambda_1^{[i]}+\mu=0&, \quad \forall i = 1,\dots, I\\
\omega^{[i]}\theta^{+[i]}(K^{[i]}-X^{[i]})+\mu=0&, \quad \forall i = I+1, \dots, n-\tilde{I}\\
\lambda_2^{[i]}+\mu=0&, \quad \forall i = n-\tilde{I}+1, \dots, n\\
0\le K^{[i]}\le X^{[i]}&,  \quad \forall i=I+1,\dots,n-\tilde{I} \\
\sum_{r=I+1}^{n-\tilde{I}}K^{[r]}=K-\sum_{r=n-\tilde{I}+1}^nX^{[r]}&,
\end{align*}
from which we can observe the following contradiction that,
\begin{align*}
    &\mu=-\lambda_2^{[n]}<0 \\
    &\mu=\omega^{[1]}\theta^{+[1]}X^{[1]}+\lambda_1^{[1]}>0.
\end{align*}

Therefore, we can conclude that if $\tilde{I}=1,\cdots,n-1$, then $I=0$ or $n$, and its contrapositive is also true, which states that if $I=1,\cdots,n-1$, then $\tilde{I}=0$ or $n$. These two statements correspond the second and third cases respectively, and they are considered as follows.

\item $\lambda_1^{(i)}=0$ for all $i = 1, \cdots, n$, and $\lambda_2^{(i)}>0$ for some $i = 1, \cdots, n$.

As in the previous case, We use the rearranged $\lambda_1^{[i]}$ and $\lambda_2^{[i]}$, so now $\lambda_2^{[i]} = 0$, for all $i = 1, \dots, n-\tilde{I}$, and $\lambda_2^{[i]} > 0$, for all $i = n-\tilde{I}+1,\dots,n$, for some $\tilde{I}=1,2,\dots,n-1$. This implies $K^{[i]}=X^{[i]}$, for all $i = n-\tilde{I}+1,\dots,n$.

In this case, the KKT conditions become
\begin{align*}
\omega^{[i]}\theta^{+[i]}(K^{[i]}-X^{[i]})+\mu=0&, \quad \forall i = 1, \dots, n-\tilde{I}\\
\lambda_2^{[i]}+\mu=0&, \quad \forall i = n-\tilde{I}+1, \dots, n\\
0\le K^{[i]}\le X^{[i]}&, \quad \forall i=1,\cdots,n-\tilde{I} \\
\sum_{r=1}^{n-\tilde{I}}K^{(r)}=K-\sum_{r=n-\tilde{I}+1}^nX^{(r)}&.
\end{align*}

By solving this system, we get
\begin{align*}
    K^{[i]}&=\left(1-\frac{\frac{1}{\omega^{[i]}\theta^{+[i]}}}{\sum\limits_{r=1}^{n-\tilde{I}}\frac{1}{\omega^{[r]}\theta^{+[r]}}}\right)X^{[i]}+\frac{\frac{1}{\omega^{[i]}\theta^{+[i]}}}{\sum\limits_{r=1}^{n-\tilde{I}}\frac{1}{\omega^{[r]}\theta^{+[r]}}}\left(K-\sum\limits_{r\neq i}^{n}X^{[r]}\right)\\
    &=X^{[i]}+\frac{\frac{1}{\omega^{[i]}\theta^{+[i]}}}{\sum\limits_{r=1}^{n-\tilde{I}}\frac{1}{\omega^{[r]}\theta^{+[r]}}}\left(K-\sum_{r=1}^{n}X^{[r]}\right), \quad \forall i=1,2,\cdots,n-\tilde{I}\\
    \mu&=\frac{1}{\sum\limits_{r=1}^{n-\tilde{I}}\frac{1}{\omega^{[r]}\theta^{+[r]}}}\left(\sum\limits_{r=1}^nX^{[r]}-K\right)\\
    \lambda_2^{[i]}&=-\mu=-\frac{1}{\sum\limits_{r=1}^{n-\tilde{I}}\frac{1}{\omega^{[r]}\theta^{+[r]}}}\left(\sum\limits_{r=1}^nX^{[r]}-K\right),\quad \forall i=n-\tilde{I}+1,\cdots,n.
\end{align*}

By the system wide shortage assumption, $\sum\limits_{r=1}^nX^{[r]}-K \geq 0$, and therefore, $\lambda_2^{[i]}\leq 0$ for all $i=n-\tilde{I}+1,\cdots,n$. But this contradicts the assumption that $\lambda_2^{[i]} > 0$ for all $i=n-\tilde{I}+1,\cdots,n$. Therefore, we can tell that this is another impossible case.

\item \label{case:3.3} $\lambda_1^{(i)}>0$ for some $i = 1, \cdots, n$, and $\lambda_2^{(i)}=0$ for all $i = 1, \cdots, n$.

Again, for this case, we use the rearranged $\lambda_1^{[i]}$ and $\lambda_2^{[i]}$ for $i = 1, \dots, n$. And we assume that there is an $I=1,2,\dots,n-1$, such that $\lambda_1^{[i]} > 0$ for $i = 1, \dots, I$, and $\lambda_1^{[i]} = 0$ for $i = I+1, \dots, n$. This implies $K^{[i]}=0$ for $i = 1, \dots, I$, and we get the following conditions, 
\begin{align*}
-\omega^{[i]}\theta^{+[i]}X^{[i]}-\lambda_1^{[i]}+\mu=0&, \quad \forall i = 1, \dots, I\\
\omega^{[i]}\theta^{+[i]}(K^{[i]}-X^{[i]})+\mu=0&, \quad \forall i = I+1, \dots, n\\
0\le K^{[i]}\le X^{[i]}&, \quad \forall i=I+1,\cdots,n \\
\sum_{i=I+1}^nX^{[r]}=K&.
\end{align*}
They together give us
\begin{align*}
    K^{[i]}&=\left(1-\frac{\frac{1}{\omega^{[i]}\theta^{+[i]}}}{\sum\limits_{r=I+1}^{n}\frac{1}{\omega^{[r]}\theta^{+[r]}}}\right)X^{[i]}+\frac{\frac{1}{\omega^{[i]}\theta^{+[i]}}}{\sum\limits_{r=I+1}^{n}\frac{1}{\omega^{[r]}\theta^{+[r]}}}\left(K-\sum\limits_{r=I+1}^nX^{[r]}\right)\\
    &=X^{[i]}+\frac{\frac{1}{\omega^{[i]}\theta^{+[i]}}}{\sum\limits_{r=I+1}^{n}\frac{1}{\omega^{[r]}\theta^{+[r]}}}\left(K-\sum\limits_{r=I+1}^nX^{[r]}\right), \quad \forall i=I+1,\cdots,n.\\
    \mu&=\frac{1}{\sum\limits_{r=I+1}^{n}\frac{1}{\omega^{[r]}\theta^{+[r]}}}\left(\sum\limits_{r=I+1}^nX^{[r]}-K\right)\\
    \lambda_1^{[i]}
    &=\frac{1}{\sum\limits_{r=I+1}^{n}\frac{1}{\omega^{[r]}\theta^{+[r]}}}\left(\sum\limits_{r=I+1}^nX^{[r]}-K\right)-\omega^{[i]}\theta^{+[i]}X^{[i]}, \quad \forall i=1,\cdots, I.
\end{align*}
Since $\lambda_1^{[i]}>0$ for $i = 1, \dots, I$ and $0\le K^{[i]}\le X^{[i]}$ for $i = I+1, \cdots, n$, the following conditions hold. $K\leq\sum\limits_{r=I+1}^nX^{[r]}$,
\begin{align*}
    X^{[i]}&<\frac{\frac{1}{\omega^{[i]}\theta^{+[i]}}}{\sum\limits_{r=I+1}^{n}\frac{1}{\omega^{[r]}\theta^{+[r]}}}\left(\sum\limits_{r=I+1}^nX^{[r]}-K\right), \forall i=1,2,\cdots,I,\\
   X^{[i]}&\ge \frac{\frac{1}{\omega^{[i]}\theta^{+[i]}}}{\sum\limits_{r=I+1}^{n}\frac{1}{\omega^{[r]}\theta^{+[r]}}}\left(\sum\limits_{r=I+1}^nX^{[r]}-K\right),\forall i=I+1,\cdots,n.
\end{align*}

\item $\lambda_1^{(i)}=0$ for all $i = 1, \cdots, n$, and $\lambda_2^{(i)}=0$ for all $i = 1, \cdots, n$.

Now it only remains to consider the case in which $\lambda_1^{(i)} = 0$ and $\lambda_2^{(i)} = 0$ for all $i = 1, \dots, n$. Since there is no divergence in the values of $\lambda_1^{(i)}$ and $\lambda_2^{(i)}$, rearrangement does not make a difference. Nevertheless, we will use the ordered indices $[i]$ here for consistency. 

In this case, conditions now become
\begin{align*}
    \omega^{[i]}\theta^{+[i]}(K^{[i]}-X^{[i]})+\mu=0&, \quad \forall i =1, \dots, n\\
    0\le K^{[i]}\le X^{[i]}&, \quad \forall i=1,\dots,n \\
    \sum_{i=1}^{n}K^{[i]}=K&.
\end{align*}
Solving this system gives us 
\begin{align*}
    K^{[i]}&=\left(1-\frac{\frac{1}{\omega^{[i]}\theta^{+[i]}}}{\sum\limits_{r=1}^{n}\frac{1}{\omega^{[r]}\theta^{+[r]}}}\right)X^{[i]}+\frac{\frac{1}{\omega^{[i]}\theta^{+[i]}}}{\sum\limits_{r=1}^{n}\frac{1}{\omega^{[r]}\theta^{+[r]}}}\left(K-\sum\limits_{r\neq i}X^{[r]}\right) \\
    &=X^{[i]}+\frac{\frac{1}{\omega^{[i]}\theta^{+[i]}}}{\sum\limits_{r=1}^{n}\frac{1}{\omega^{[r]}\theta^{+[r]}}}\left(K-\sum_{r=1}^nX^{[r]}\right), \quad \forall i=1,\cdots,n.
\end{align*}
Since we assume there is a system wide shortage, $K-\sum_{r=1}^nX^{[r]}\leq 0$, and thus we get $K^{[i]}\le X^{[i]}$ for all $i = 1,\dots, n$. Then, the only required condition for this result to hold is $0\le K^{[i]}$ for all $i=1,\cdots,n$, which gives us
\[X^{[i]}\ge \frac{\frac{1}{\omega^{[i]}\theta^{+[i]}}}{\sum\limits_{r=1}^{n}\frac{1}{\omega^{[r]}\theta^{+[r]}}}\left(\sum_{r=1}^{n}X^{[r]}-K\right), \quad \forall i=1,2,\cdots,n.\]

This result can actually be seen as a special case of \textbf{Case} \ref{case:3.3}, where $I=0$, and therefore, we can combine the results from these two cases.

\end{casesp}

\end{casesp}

In summary, the result depends on whether there is a system wide surplus or shortage. In the event of a system wide surplus, \textit{i.e.}, $K^{(i)} > X^{(i)}$ for $i = 1, \dots, n$, the optimal allocation in each region is given by
\begin{align*}
    K^{(i)}=\left(1-\frac{\frac{1}{\omega^{(i)}\theta^{-(i)}}}{\sum\limits_{r=1}^{n}\frac{1}{\omega^{(r)}\theta^{-(r)}}}\right)X^{(i)}+\frac{\frac{1}{\omega^{(i)}\theta^{-(i)}}}{\sum\limits_{r=1}^{n}\frac{1}{\omega^{(r)}\theta^{-(r)}}}\left(K-\sum\limits_{r\neq i}X^{(r)}\right).
\end{align*}

In the event of a system wide shortage, \textit{i.e.}, $0 \leq K^{(i)} \le X^{(i)}$ for $i = 1, \dots, n$, it has been demonstrated that the solution can be found by sorting $\lambda_1^{(i)}$ in such a way that the first $I=0,1,\dots,n-1$ of them are greater than $0$. If $I$ is $0$, then all $\lambda_1^{(i)} = 0$. The $I$ here should satisfy the following conditions, $K\leq\sum\limits_{r=I+1}^nX^{[r]}$,
{\allowdisplaybreaks
\begin{align*}
    X^{[i]}&<\frac{\frac{1}{\omega^{[i]}\theta^{+[i]}}}{\sum\limits_{r=I+1}^{n}\frac{1}{\omega^{[r]}\theta^{+[r]}}}\left(\sum\limits_{r=I+1}^nX^{[r]}-K\right), \forall i=1,2,\cdots,I,\\
   X^{[i]}&\ge \frac{\frac{1}{\omega^{[i]}\theta^{+[i]}}}{\sum\limits_{r=I+1}^{n}\frac{1}{\omega^{[r]}\theta^{+[r]}}}\left(\sum\limits_{r=I+1}^nX^{[r]}-K\right),\forall i=I+1,\cdots,n,
\end{align*}
where if $I=0$, then the first inequality can be discarded. Once $I$ is identified, the optimal allocation in each region is given by $K^{[1]}=\dots=K^{[I]}=0$,}
\begin{align*}
K^{[i]}&=\left(1-\frac{\frac{1}{\omega^{[i]}\theta^{+[i]}}}{\sum\limits_{r=I+1}^{n}\frac{1}{\omega^{[r]}\theta^{+[r]}}}\right)X^{[i]}+\frac{\frac{1}{\omega^{[i]}\theta^{+[i]}}}{\sum\limits_{r=I+1}^{n}\frac{1}{\omega^{[r]}\theta^{+[r]}}}\left(K-\sum\limits_{r=I+1}^nX^{[r]}\right), \quad \forall i=I+1,\cdots,n.
\end{align*}
Since $\lambda_1^{(i)}$ are sorted, the final result should have been re-sorted accordingly. However, the conditions are the same regardless of the ordering. Finally, we relabel $I$ and $n-I$ for the sake of simplifying the notations in the main text, and hence $X^{(i)}$ should be sorted in descending order instead.

\end{proof}

\section{Parameter Values in Numerical Examples} \label{sec:para}

This section offers an inventory of all model parameters used in earlier sections. The same sets of parameters are used for all calculations in the three-pillar framework.


\begin{table}[h!]
\centering
\begin{tabular}{lc}
\hline
\textbf{Parameter}                                           & \textbf{Value} \\ \hline
Percentage of intensive care patients requiring ventilators (class $I_3$) & 0.9   \\
Units of PPE required per exposed patient (class $E$)             & 5     \\
Units of PPE required per hospitalized patient (class $I_2$)            & 15    \\
Units of PPE required per intensive care patient (class $I_3$)            & 20    \\ \hline
\end{tabular}
\caption{Demand assessment parameters; values are chosen in the ranges provided in Tables \ref{tbl:vent} and \ref{ECDCTable2}.}
\end{table}

\begin{table}[h!]
\centering
\begin{tabular}{ll}
\hline
\textbf{Parameter}                                                                  & \textbf{Value}                     \\ \hline
Participating states                                                                & New York, Florida, California      \\
Cost of possession per unit per day ($c_j$)                                               & 1                                  \\
Initial stockpile cost per unit ($c_0$)                                             & 25120                              \\
Daily production rate ($a$)                                                         & 10 Units                           \\
Shortage/surplus cost ($\theta_j^+$/$\theta_j^-$) in Figure \ref{fig:aggre_vent_supply_optim}  & 1000/1000                             \\
Shortage/surplus cost ($\theta_j^+$/$\theta_j^-$) in Figure \ref{fig:aggre_vent_supply_optim_asym}  & 1000/20                            \\
Time varying weight ($\omega_j$)                                                    & Proportional to daily demand $X_j$ \\ \hline
\end{tabular}
\caption{Ventilator planning parameters \citep{Porpora2020,Rowland2020, Patel2020}}
\end{table}



\begin{table}[H]
\centering
\begin{tabular}{ll}
\hline
\textbf{Parameter}                                                                  & \textbf{Value}                     \\ \hline
Participating states                                                                & New York, Florida, California      \\
Cost of possession per 1000 units per day ($c_j$)                                         & 0.01                                  \\
Initial stockpile cost per 1000 units ($c_0$)                                       & 0.5                                  \\
Daily production rate ($a$)                                                         & 50000 units                        \\
Shortage cost ($\theta_j^+$) & 1                                  \\
Time varying weight ($\omega_j$)                                                    & Proportional to daily demand $X_j$ \\ \hline
\end{tabular}
\caption{Personal protective equipment planning parameters}
\end{table}

\begin{table}[H]
\centering
\begin{tabular}{ll}
\hline
\textbf{Parameter}                                                                            & \textbf{Value}                                                  \\ \hline
Participating states                                                                          & New York, Florida, California                                   \\
Shortage/surplus cost ($\theta_j^{(i)+}$/$\theta_j^{(i)-}$) & 1                                                               \\
Weight for resources allocation in region $i$ at time $j$ ($\omega_j^{(i)}$)                  & Proportional to $\sum_{t = j}^m X_t^{(i)}$ \\ \hline
\end{tabular}
\caption{Ventilator/Personal protective equipment allocation parameters}
\end{table}


}

\end{document}